\title{Distributed Asynchronous Games With Causal Memory are Undecidable}
\newcommand{\dsp}{{\sc distributed control problem}}
\newcommand{\lcp}{{\sc bipartite coloring problem}}
\newcommand{\ignore}[1]{}
\newcommand{\NN}{\mathbb{N}}
\newcommand{\mycolor}{BurntOrange}
\newcommand{\PP}{\mathbb{P}}
\renewcommand{\AA}{\mathcal{A}}
\newcommand{\bi}{\begin{itemize}}
\newcommand{\ei}{\end{itemize}}
\newcommand{\ind}{~\mathbb{I}~}
\newcommand{\pref}{\sqsubseteq}
\newcommand{\be}{\begin{equation}}
\newcommand{\ee}{\end{equation}}
\DeclareMathOperator{\dom}{dom}
\DeclareMathOperator{\view}{\partial}
\DeclareMathOperator{\last}{last}
\DeclareMathOperator{\nex}{next}
\DeclareMathOperator{\prev}{prev}
\DeclareMathOperator{\plays}{plays}
\DeclareMathOperator{\alphabet}{Alph}
\DeclareMathOperator{\state}{state}
\newcommand{\PCP}{{\sc Post Correspondence Problem}}
\newcommand{\yes}{{\tt SameLength}}
\newcommand{\no}{{\tt -}}
\newcommand{\terminate}{\ensuremath{\mathtt{END}}}
\newcommand{\increment}{\ensuremath{\mathtt{I}}}
\newcommand{\chk}{\ensuremath{\mathtt{CHECK}}}
\newcommand{\answer}{\ensuremath{\mathtt{ANSWER}}}
\newcommand{\sss}{{\tt End}}
\newcommand{\lose}{\ensuremath{\mathtt{LOSE}}}
\newcommand{\win}{\ensuremath{\mathtt{WIN}}}
\newcommand{\yess}{{\tt SameTile}}
\newcommand{\noo}{{\tt -}}
\definecolor{amber}{rgb}{1.0, 0.75, 0.0}
\author[H.~Gimbert]{Hugo Gimbert}	
\address{LaBRI, CNRS, Université de Bordeaux, France}	
\email{hugo.gimbert@cnrs.fr}
\begin{document}

\begin{abstract}
We show the undecidability of the distributed control problem
when the plant is an asynchronous automaton,
the controllers use causal memory and
the goal of the controllers is to put each process
in a local accepting state.
 \end{abstract}

\maketitle

 \section*{Introduction}
  The decidability of the distributed version of the Ramadge and Wonham control problem~\cite{ramadge1989control},
where both the plant and the controllers are modeled as asynchronous automata~\cite{zautomata,thebook}
and the controllers have causal memory
has been an open problem for some time~\cite{gastin,alook,mumu}.

In this setting a controllable plant is distributed on several finite-state processes
which interact asynchronously using shared actions.
On every process, the local controller can choose to block some of the actions,
called \emph{controllable} actions, but it cannot block the \emph{uncontrollable} actions from the environment.
The choices of the local controllers are based on two sources of information.
First, the local controller monitors the sequence of states and actions of the local process.
This information is called the \emph{local view} of the controller.
Second, when a shared action is played by several processes
then all the controllers of these processes can exchange as much information as they want.
In particular together they can compute their mutual view of the global execution:
their \emph{causal past}.

A  controller is correct if it guarantees that every
possible execution of the plant
satisfies some specification. The controller synthesis problem is a decision problem which, given a plant as input,
asks whether the system admits a correct controller.
In case such a controller exists, the algorithm should compute one as well.

The difficulty of controller synthesis depends on several factors, including the architecture of the information flow between processes (pipeline, ring, ...), 
 the information available to the controllers,
and the specification.

In early work on distributed controller synthesis,
for example in the setting of~\cite{pneuli1990distributed}, the only source of information available to the controllers is their local view,
and the transmission of information between controllers is strictly limited to reading and writing shared variables with fixed domains.
In this setting, distributed synthesis is not decidable in general, but decidable for architectures without information forks~\cite{finkbeiner2005uniform}.

The setting where controllers are allowed to freely communicate with each other upon synchronisation,
and can rely on their full causal past
to select the controllable actions
appeared first in~\cite{gastin}.
In this setting, the class of decidable architectures goes far beyond architectures without information forks,
for example all series-parallel games are decidable~\cite{gastin}.

\medskip

We adopt a modern terminology and call the plant a \emph{distributed game} and the controllers are \emph{distributed strategies} in this game.
A distributed strategy is a function that maps the causal past of processes
to a subset of controllable actions.
In the present paper we focus on the \emph{termination condition},
which is satisfied when each process is guaranteed to terminate its computation in finite time,  in a final state.
A distributed strategy is winning if it guarantees the termination condition,
whatever uncontrollable actions are chosen by the environment.

We are interested in the following algorithmic problem:
\smallskip

\noindent \dsp: given a distributed game decide
whether there exists a distributed winning strategy.

\smallskip

\paragraph*{Our contribution}
This paper shows that the \dsp\ for plants with six processes is undecidable.
The proof is in two steps: reduce the \PCP\ to an intermediary decision problem called the \lcp\ (Theorem~\ref{theo:reduction}) and then reduce the latter problem to the \dsp\ (Theorem~\ref{theo:lcptodsp}).
Theorem~\ref{theo:reduction} reformulates the \PCP\ as a problem of satisfaction of local constraints on a finite graph.
The proof of Theorem~\ref{theo:lcptodsp}
sheds light on the use of concurrency and causal memory to implement local constraints.

\paragraph*{Related work}
There are several classes of plants for which the \dsp\
has been shown decidable:
when the dependency graph of actions is series-parallel~\cite{gastin};
when the processes are connectedly communicating~\cite{madhu};
when the dependency graph of processes is a tree~\cite{acyclic,DBLP:conf/fsttcs/MuschollW14};
for the class of decomposable games,
which encompasses these three former decidability results~\cite{DBLP:conf/fsttcs/Gimbert17} and includes games with four processes;
and finally for plants where a single process has access to controllable actions,
while all other processes
have only access to uncontrollable actions~\cite[Corollary 7]{beutner2019translating}.

Petri games with causal memory are another, well-studied,
model of distributed asynchronous game~\cite{finkbeiner2017petri}.
In general, the number of token in a Petri game is \emph{unbounded},
and these games are undecidable, since they can encode games on vector addition systems with states~\cite[Theorem 6.9]{finkbeiner2017petri}.
When the number of tokens is \emph{bounded},
two subclasses of Petri games are known to be decidable:
Petri games with a bounded number of system players against a single environment player~\cite{finkbeiner2017petri}
as well as Petri games with a single system player against a bounded number of environment players~\cite{finkbeiner_et_al:LIPIcs:2018:8406}.
These two decidability results do not cover
the general case of Petri games with a bounded number of tokens
(in general there is strictly more
than one system player and strictly more than one environment player)
and the decidability of the general case
was until now an open question, to our knowledge.
The \dsp\ considered in the present paper
can be encoded as the existence of a winning strategy in a Petri game
with a \emph{bounded} number of tokens~\cite[Theorem 5]{beutner2019translating},
thus the undecidability result of the present paper implies as a corollary that Petri games with a bounded number of tokens are undecidable, either with the termination condition
(every token ends up in a final state) or the deadlock-freeness condition (the latter condition is discussed in the conclusion).

A recent result shows that Petri games with \emph{global} winning conditions are undecidable, even with only two system players and one environment player~\cite[Theorem 9]{finkbeiner2022global}. The latter result relies on an encoding of the \PCP\@. The global winning condition is used to enforce certain linearizations of the parallel runs and directly encode the synchronous setting of Pnueli and Rosner~\cite{pneuli1990distributed,schewe2014distributed}.
The undecidability proof of the present paper relies as well on an encoding of the \PCP,
however the winning condition is purely local, and there is no way to enforce particular linearizations of the play. Instead, we develop a specific proof technique using six processes
whose behavior is constrained by the set of possible parallel runs.
Remark that four processes would not be enough to get undecidability in our setting,
since this case is known to be decidable~\cite{DBLP:conf/fsttcs/Gimbert17}.

\paragraph*{Organisation of the paper}
Section~\ref{sec:games} defines the \dsp.
Section~\ref{sec:lcp} defines the \lcp\ and shows that the \PCP\ effectively reduces to the \lcp.
Section~\ref{sec:undec} effectively reduces the \lcp\ to the \dsp.

 \section{Distributed Asynchronous Games with Causal Memory}%
\label{sec:games}

The theory of Mazurkiewicz traces is very rich,
for a thorough presentation see~\cite{thebook}.
Here we only fix notations and recall the notions of traces, views, prime traces and parallel traces.

We fix an alphabet $A$ and a symmetric and reflexive dependency relation $D \subseteq A\times A$
and the corresponding independency relation $\ind \subseteq A\times A$ defined as $
\forall a,b\in A, (a \ind b) \iff (a,b)\not\in D$.
A  \emph{Mazurkiewicz trace}
or, more simply, a \emph{trace},
is an equivalence class
for the smallest equivalence relation $\equiv$ on $A^*$
which commutes independent letters i.e.
for all letters $a,b$
and all words $w_1,w_2$,
\[
a \ind b \implies
w_1abw_2\equiv
w_1baw_2\enspace.
\]
The words in the equivalence class are the \emph{linearizations} of the trace.
The set of all traces is denoted $A^*_\equiv$.
The trace whose only linearization is the empty word is
denoted $\epsilon$.
All linearizations of a trace $u$ have the same set of letters and length, denoted respectively $\alphabet(u)$ and $|u|$.

The concatenation on words naturally extends to traces.
Given two traces $u,v\in A^*_\equiv$, the trace $uv$ is the equivalence class of any word in $uv$.
The prefix relation $\pref$ is defined by
\[
(u\pref v \iff \exists w \in A^*_\equiv, uw\equiv v)\enspace,
\]
and the suffix relation is defined similarly.

\paragraph*{Maxima, prime traces and parallel traces}
A letter $a\in A$ is a \emph{maximum} of a trace $u$ if it is the last letter of one of the linearizations of $u$.
A trace $u\in A_\equiv^*$ is \emph{prime} if it has a unique maximum,
denoted $\last(u)$
and called the last letter of $u$.
Two prime traces $u$ and $v$ are said to be \emph{parallel}
if
\begin{itemize}
\item
neither
$u$ is a prefix of $v$ nor
$v$ is a prefix of $u$; and
\item
there is a trace $w$ such that both $u$ and $v$ are prefixes of $w$.
These notions are illustrated on Fig.~\ref{fig:example}.
\end{itemize}

\begin{figure}

\begin{tikzpicture}
\begin{scope}[scale=1]

\newcommand{\eeee}{\mycolor}
\newcommand{\eeeeee}{\mycolor}
\newcommand{\eeeee}{1.5cm}
\newcommand{\vfour}{black}
\newcommand{\vfourr}{\mycolor}
\newcommand{\colw}{black}
\newcommand{\spec}{\vfour}
\newcommand{\ff}[1]{{}}

\newcommand{\ssss}{
\tikzstyle{every node}=[node distance=.5cm]
\node(lab1) at (1,-1) {$1$ };
\node(lab2) [below of=lab1] {$2$ };
\node(lab3) [below of=lab2] {$3$ };
\node(lab4) [below of=lab3] {$4$ };
\node(lab5) [below of=lab4] {$5$ };
\node(lab6) [below of=lab5] {$6$ };
\node(lab7) [below of=lab6] {$7$ };
\tikzstyle{every node}=[node distance=.2cm]
\node(l1) [right of=lab1] {};
\tikzstyle{every node}=[node distance=.5cm]
\node(l2) [below of=l1] {};
\node(l3) [below of=l2] {};
\node(l4) [below of=l3] {};
\node(l5) [below of=l4] {};
\node(l6) [below of=l5] {};
\node(l7) [below of=l6] {};
\tikzstyle{every node}=[node distance=\eeeee]
\node(r1) [right of=l1] {};
\draw[gray] (l1) -- (r1);
\node(r2) [right of=l2] {};
\draw[gray] (l2) -- (r2);
\node(r3) [right of=l3] {};
\draw[gray] (l3) -- (r3);
\node(r4) [right of=l4] {};
\draw[gray] (l4) -- (r4);
\node(r5) [right of=l5] {};
\draw[gray] (l5) -- (r5);
\node(r6) [right of=l6] {};
\draw[gray] (l6) -- (r6);
\node(r7) [right of=l7] {};
\draw[gray] (l7) -- (r7);

\tikzstyle{every state}=[fill=black,draw=none,inner sep=0pt,minimum size=0.2cm]
\tikzstyle{every node}=[node distance=.3cm]
\node[state](a2)[right of=l2, fill=\vfour]{};
\tikzstyle{every node}=[node distance=.5cm]
\node[state](a3)[below of=a2, fill=\vfour]{};
\node[state](a4)[below of=a3, fill=\spec]{};
\node[state](a5)[below of=a4, fill=\spec]{};
\draw[\spec] (a4) -- (a5);
\ff{\node[state](a6)[below of=a5, fill=\colw]{};
\node[state](a7)[below of=a6, fill=\colw]{};
\draw[\colw] (a6) -- (a7);}

\node[state](b2)[right of=a2, node distance=.3cm, fill=\vfour]{};
\node[state](b3)[below of=b2, fill=\vfour]{};
\draw (b2) -- (b3);
\node[state](b4)[below of=b3, fill=\vfour]{};
\ff{\node[state](b5)[below of=b4, fill=\eeeeee]{};
\node[state](b6)[below of=b5, fill=\eeeeee]{};
\draw[color=\eeeeee] (b5) -- (b6);
}

\node[state](c2)[right of=b2, fill=\eeee, node distance=.3cm]{};
\node[state](c1)[above of=c2, fill=\eeee]{};
\draw (c1) -- (c2);
\node[state](c3)[below of=c2, fill=\vfourr]{};
\node[state](c4)[below of=c3, fill=\vfourr]{};
\draw[\vfourr] (c3) -- (c4);

\ff{
\node[state](d2)[right of=c2, node distance=.3cm]{};
\node[state](d3)[below of=d2]{};
\draw (d2) -- (d3);

\node[state](e2)[right of=d2, node distance=.3cm, fill=\eeee]{};
\node[state](e1)[above of=e2, fill=\eeee]{};
\draw[color=\eeee] (e2) -- (e1);

\node(u)[above right of=e1]{$u$};

\draw [black] plot [smooth, tension=0.7] coordinates { (1.2,-0.7) (3,-1) (2,-4) (1.2,-4.2)};
}
}

\ssss

\renewcommand{\eeee}{black}
\renewcommand{\eeeee}{2.3cm}
\renewcommand{\vfour}{\mycolor}
\renewcommand{\vfourr}{\vfour}
\renewcommand{\eeeeee}{black}
\renewcommand{\ff}[1]{{#1}}

\begin{scope}[shift={(2.5,0)}]
\ssss
\node(view)[right of =c4, node distance=0.8cm]{\color{\vfour}$\bf\view_4(u)$};
\end{scope}

\renewcommand{\eeeee}{3.1cm}
\renewcommand{\vfour}{black}
\newcommand{\colv}{black}
\renewcommand{\colw}{\mycolor}
\renewcommand{\spec}{\colw}
\renewcommand{\eeeeee}{\colw}
\newcommand{\eeeeeeee}{black}

\newcommand{\sssss}{
\ssss
\node[state](f2)[right of =e2,fill=\colv]{};
\node[state](f3)[below of =f2,fill=\colv]{};
\draw[\colv] (f2)--(f3);
\node[state](f4)[below of =f3,fill=\eeeeeeee]{};
\node[state](f5)[below of =f4,fill=\eeeeeeee]{};
\draw[\colv] (f4)--(f5);
\node[state](g3)[right of =f3, node distance=.3cm]{};
\node[state](g4)[below of =g3]{};
\draw[\colv] (g3)--(g4);

\node(v)[above right of =g3]{$v$};
\draw [black] plot [smooth, tension=0.7] coordinates { (3.02,-1.2) (3.6,-1.4) (3.6,-3) (2.4,-3.2)};

\node[state](d6)
[right of =b6,fill=\colw, node distance=.7cm]{};
\node[state](d7)
[below of =d6,fill=\colw]{};
\node[state](e6)
[right of =d6,fill=\colw, node distance=.3cm]{};
\node[state](e7)
[below of =e6,fill=\colw]{};
\draw[\colw] (e6)--(e7);
\node(w)
[right of =e6, node distance=.5cm]{$w$};

\draw [black] plot [smooth, tension=0.7] coordinates { (2.8,-3.23) (3.1,-3.6) (2.9,-4.2) (1.85,-4.2)};
}

\begin{scope}[shift={(5.5,0)}]
\sssss

\node(view)[right of =e7, node distance=1cm]{\color{\colw}$\bf\view_6(uw)$};

\end{scope}

\renewcommand{\eeeee}{3.1cm}
\renewcommand{\vfour}{\mycolor}
\renewcommand{\colv}{black}
\renewcommand{\colw}{\mycolor}
\renewcommand{\spec}{\mycolor}
\renewcommand{\eeeeee}{\mycolor}
\renewcommand{\eeeeeeee}{\mycolor}

\begin{scope}[shift={(9.5,0)}]
\sssss

\node[state](oups)[right of =f5, node distance=.7cm, fill=\mycolor]{};
\node(c)[below right of =oups, node distance=.3cm]{$c$};
\node[state](oups2)[below of =oups, fill=\mycolor]{};
\draw[\mycolor] (oups)--(oups2);

\node(w)
[below right of =oups2, node distance=.5cm]{ \color{\mycolor}$\bf\view_6(uvwc)$};

\end{scope}

\end{scope}
\end{tikzpicture}

\caption{\label{fig:example}
The set of processes is $\{1\ldots 7\}$.
A letter is identified with its domain.
Here the domains are either singletons, represented by a single dot,
or pairs of contigous processes,
represented by two dots connected with a vertical segment.
The trace
$
\{2\}\{3\}\{4,5\}\{2,3\}\{4\}\{1,2\}\{3,4\}
=
\{4,5\}\{4\}\{2\}\{3\}\{2,3\}\{3,4\}\{1,2\}
$
is represented on the left-handside.
It has two maximal letters $\{1,2\}$ and $\{3,4\}$ thus is not prime.
Center left: process $4$ sees only its causal view $\view_4(u)$ (in yellow).
Center right: $uvw=uwv$ since $\dom(v)\cap \dom(w)=\emptyset$. Both $uv$ and $\view_{6}(uw)$ (in yellow) are prime prefixes of $uvw$ and they are parallel.
Right: $uv$ and $\view_{6}(uvwc)$ (in yellow) are parallel.
}
\end{figure}


\subsection{Asynchronous automata}

Asynchronous automata are to traces what finite automata are to finite words, as witnessed by Zielonka's theorem~\cite{zautomata}. An asynchronous automaton is a collection of automata on finite words, whose transition tables do synchronize on certain actions.

\begin{defi}
An asynchronous automaton on alphabet $A$ with a set processes $\PP$ is a tuple
$
\mathcal{A} = ((A_p)_{p\in \PP},(Q_p)_{p\in \PP},  (i_p)_{p\in \PP},(F_p)_{p\in \PP}, \Delta)
$
where:
\begin{itemize}
\item
 every process $p\in \PP$ has a set of actions
 $A_p$, a set of states $Q_p$
 and $i_p\in Q_p$ is the initial state of $p$ and $F_p\subseteq Q_p$ its set of final states.
\item
 $A=\bigcup_{p\in \PP} A_p$.
 For every letter $a\in A$, the domain of $a$ is
 $
 \dom(a)=\{p\in \PP\mid a\in A_p\}\enspace.
$
 \item
$\Delta$ is a set of transitions of the form
$(a,(q_p,q'_p)_{p\in\dom(a)})$
where $a\in A$
and $q_p,q'_p\in Q_p$.
Transitions are \emph{deterministic}: for every $a\in A$,
if
$\delta=(a,(q_p,q'_p)_{p\in \dom(a)})\in \Delta$
and $\delta'=(a,(q_p,q''_p)_{p\in \dom(a)})\in \Delta$
then $\delta=\delta'$
(hence
 $\forall p \in \dom(a),q'_p=q''_p$).
\end{itemize}
\end{defi}

\noindent
Such an automaton works asynchronously:
each time a letter $a$ is processed, the states of the processes
in $\dom(a)$ are updated according to the corresponding
transition, while the states of other processes do not change.
This induces a natural commutation relation $\ind$ on $A$:
two letters commute iff they have no process in common i.e.
\begin{align*}
& (a \ind b) \iff (\dom(a)\cap \dom(b) = \emptyset)\enspace.
\end{align*}

The set of \emph{plays} of the automaton $\AA$
is a set of traces denoted $\plays(\AA)$ and defined inductively, along with
a mapping $\state:\plays(\AA) \to \Pi_{p\in\PP}Q_p$.
\begin{itemize}
\item $\epsilon$ is a play and $\state(\epsilon)=(i_p)_{p\in \PP}$,
\item for every play $u$ such that $(\state_p(u))_{p\in\PP}$ is defined
and $\left(a,(\state_p(u),q_p)_{p\in\dom(a)}\right)$
is a transition then
$ua$ is a play and
$
\forall p \in \PP,
\state_p(ua) =
\begin{cases}
\state_p(u) &\text{ if $p\not\in\dom(a)$,}\\
q_p &\text{ otherwise.}
\end{cases}
$
\end{itemize}

\noindent
For every play $u$, $\state(u)$ is called the \emph{global state} of $u$.
The inductive definition of $\state(u)$ is correct because it is invariant by
commutation of independent letters of $u$.

\paragraph*{The domain of a trace}
For every trace $u$ we can count how many times a process $p$
has played an action in $u$, which we denote $|u|_p$.
Formally, $|u|_p$ is first defined for words, as the length of the projection
of $u$ on $A_p$, which is invariant by commuting letters.
The domain of a trace $u$ is defined as
\[
\dom(u) = \left\{ p \in \PP \mid |u|_p \neq 0\right\}\enspace.
\]

\subsection{Processes playing against their non-deterministic environment}

Given an automaton $\AA$,
we want the processes to choose actions
in such a way that each one of them is guaranteed
to eventually reach a final state.

To take into account the fact that some actions are controllable by processes while some other actions are not, we assume that $A$ is partitioned in two disjoint sets:
\[
A=A_c \sqcup A_e
\]
where $A_c$ is the set of controllable actions and
$A_e$ the set of (uncontrollable) environment actions.
Intuitively, processes cannot prevent their environment to
play actions in $A_e$, while they can decide whether to block or allow any action in $A_c$.

We adopt a modern terminology and call the automaton $\AA$ together with the partition
$A=A_c\sqcup A_e$ a \emph{distributed game}, or even more simply a \emph{game}.
In this game, the processes form a team of players which share the same goal
but not the same information.
The plays of this game are the plays of the automaton
and the common goal of all processes is to cooperate with each other so that
the play eventually terminates, with every process in a final state.
At the beginning of the game, every process $p$
selects, among the letters $A_p$ whose domain contains $p$,
a subset of controllable actions to block, possibly all of them.
Then $p$ waits for a transition of the automaton on $A_p$.
A transition on letter $a$ is possible if either
$a$ is uncontrollable or if $a$ is controllable and is not blocked by any
of the processes in $\dom(a)$.
After the transition, each process in $\dom(a)$
updates the set of controllable actions being blocked.


Every process $p$ follows a \emph{distributed strategy}
which dictates how to update the set of controllable blocked actions
after every transition.
This choice is made by $p$ on the basis
of the information available to $p$ about the play.

The information available to $p$ is modelled by a prefix
of the global play $u$, called the \emph{causal view} of $p$ and denoted $\view_p(u)$.
In general $\view_p(u)$ is a strict prefix of $u$,
because $p$ cannot directly observe the transitions on actions
whose domain does not contain $p$. However,
$p$ can learn indirectly about such transitions:
every time $p$
performs a transition on some action $a$, he can communicate
with all other processes
participating in the transition,
i.e.\ all processes in $\dom(a)$.
These processes update their causal view
to a common mutual value which includes
all transitions known by at least one of the processes in $\dom(a)$,
plus the current transition.
The computation of the causal view $\view_p(u)$ is
illustrated on Fig.~\ref{fig:example}
and defined formally as follows.

\begin{defi}[Causal view]\label{def:views}
For every process $p\in \PP$
and trace $u$,
the causal view of $u$ by $p$, or equivalently the \emph{$p$-view} of $u$,
denoted $\view_p(u)$,
 is the unique trace such that
 $u$ factorizes as
$
u=\view_p(u)\cdot v$ and
$v$ is the longest suffix of $u$ such that
$p \not \in \dom(v)$.
\end{defi}

The strategic choices of a process are made solely on the basis of its causal view
of the global run, using a distributed strategy.

\begin{defi}[Distributed strategies, consistent and maximal plays]
Let $G=(\AA,A_c, A_e)$ be a distributed game.
A \emph{strategy for process $p$} in $G$ is a mapping
which associates with every play $u$ a set of actions $\sigma_p(u)$ such that:
\begin{itemize}
\item
all environment actions are allowed: $A_e\subseteq \sigma_p(u)$,
\item
the decision depends only on the view of the process:
$\sigma_p(u)=\sigma_p(\view_p(u))$.
\end{itemize}
A \emph{distributed strategy}
 is a tuple $\sigma=(\sigma_p)_{p\in\PP}$
where each $\sigma_p$ is a strategy of process $p$.

Let $\sigma$ be a distributed strategy.
A play $u=a_1\cdots a_{|u|}\in\plays(\AA)$ is \emph{consistent with $\sigma$},
or equivalently is a \emph{$\sigma$-play} if:
\[
\forall i \in 1\ldots |u|,
\forall p\in\dom(a_i),
a_i\in\sigma_p(a_1\cdots a_{i-1})\enspace.
\]
A $\sigma$-play is \emph{maximal} if it is not the strict prefix of another $\sigma$-play.
\end{defi}

Albeit a distributed strategy for the controllers limits which plays can occur,
by blocking some of the controllable actions,
in general there are still many different possible plays consistent with the strategy.
There are two sources of non-determinism:
first, processes cannot block
uncontrollable actions;
second the processes may strategically decide
 to allow several controllable actions,
 without knowing in advance which one may be used by the next transition.
 This is illustrated by the example on Fig.~\ref{fig:gameexample},
whose detailed analysis is provided at the end of the section.

Note that a strategy is forced to allow every environment action to be executed at every moment.
This may seem to be a huge strategic advantage for the environment.
However depending on the current state,
not every action can be effectively used in a transition
because the transition function is not assumed to be total.
So in general not every environment actions can actually occur in a play,
and from some states there may be no outgoing transition,
in which case the corresponding process terminates its computation.

 \paragraph*{Winning games}

Our goal is to synthesize  strategies
which ensure that the game terminates and
all processes are in a final state.

\begin{defi}[Winning strategy]
A strategy $\sigma$ is winning if
the set of $\sigma$-plays  is finite
and in every maximal $\sigma$-play $u$,
every process is in a final state
i.e. $
\forall p \in \PP, \state_p(u)\in  F_p\enspace.
$
\end{defi}

A winning strategy should guarantee the goal
in all maximal plays consistent with the strategy.
Thus, when one seeks to synthesise a correct controller,
the non-deterministic choice of play can be seen as another,
antagonistic player, called the environment,
trying to pick-up the worst possible transitions compatible with the strategy
and prevent the processes to achieve their goal.

In this paper however we do not equip the environment
with strategies, the only players are the processes
and we are interested in finding a strategy which guarantees
a win for the processes, whatever play is non-deterministically selected
by the environment.

\paragraph*{An example}
\begin{figure}
\includegraphics[width=10cm]{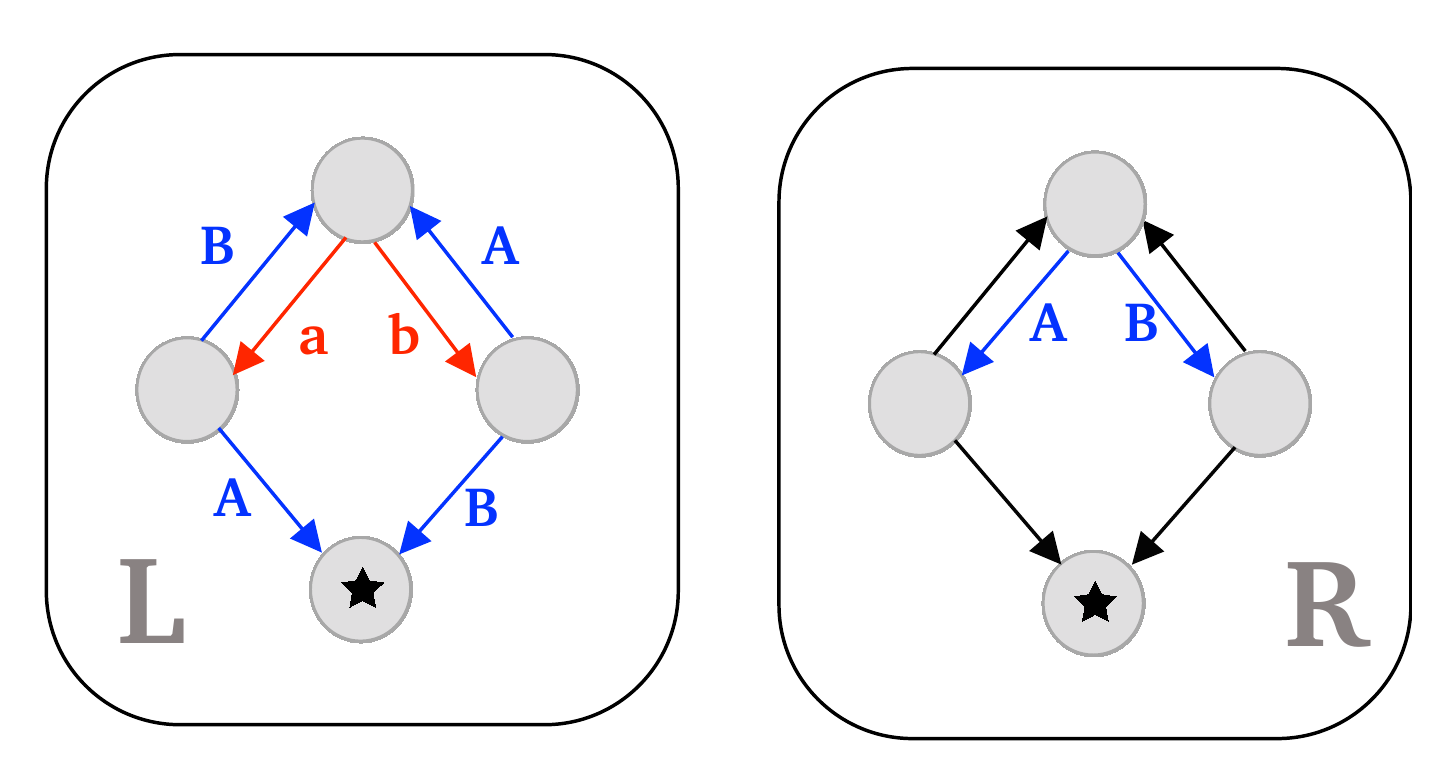}
\caption{
In this game there are two processes $L$ and $R$ (for left and right),
whose goal is to reach the starred state.
The two red actions $a$ and $b$ are uncontrollable actions, local to the $L$ process.
The two blue actions $A$ and $B$ are controllable actions, shared between processes $L$ and $R$. There are also four controllable local actions on process $R$,
depicted as black arrows.
The processes have a winning strategy.%
\label{fig:gameexample}
}
\end{figure}

We analyse the example of Fig.~\ref{fig:gameexample} in details,
in order to highlight the fact that processes may strategically need
to unlock several controllable actions at once.
Both processes start in the top state
and want to terminate by reaching the starred state on the bottom.
From the top state, the process $L$ waits for the environment
to trigger one of the uncontrollable actions $a$ or $b$, both local to $L$.
Both actions $A$ and $B$ are controllable and shared by both processes,
hence the next transition of $L$ and $R$, if any, depends on
the strategic choice of both processes.
What should process $R$ do?
He cannot observe $L$ thus his choice is independent of the transitions on $L$.

Assume first that $R$ blocks action $A$ and only allows action $B$.
This is not a very good idea.
In parallel, the environment may trigger the uncontrollable action $a$.
To avoid a deadlock in a non-final state, process $L$ should clearly not block the action $B$ and instead allow a synchronization with $R$ on action $B$.
When the play $aB$ occurs,
both processes exchange their causal past,
and $R$ knows that
$L$ is back to its initial state.
Then it would a selfish losing strategy on the behalf of $R$ to
perform the controllable local transition to its final starred state,
since $R$ would terminate but $L$ would then never
be able to reach its own starred state,
and the processes would lose.
Instead, $R$ should return as well to its initial state.
As long as $R$ selects a single action,
such a loop may repeat forever,
in which case the processes (unfairly) lose the game.

What happens instead if $R$ allows both actions $A$ and $B$?
Then $L$ can wait for either of the uncontrollable actions $a$ and $b$
to occur, and allow only the action leading to the bottom starred state,
say $A$ in case $a$ has been played.
The only possible transition is then a synchronization on $A$,
in which case $R$ learns that $L$ has reached its final starred state.
Then $R$ can select the controllable local transition to its own final starred state,
and the processes win.
This is a winning strategy.

\paragraph*{The decision problem}

The following decision problem is the central motivation for
this work:

\smallskip

\medskip

{\noindent \sc \dsp:}
given a distributed game decide
whether  there exists a winning strategy.

\medskip

\smallskip

Whether this problem is decidable
has been an open problem for some time~\cite{gastin,alook,mumu}.
In this paper we show that this problem is undecidable.

\section{The \lcp}%
\label{sec:lcp}

The proof of the undecidability of the \dsp\ makes use of an intermediary decision problem:
In the sequel, we use the notation $[n]$ for the integer interval
\[
[n]=\{0,\ldots, (n-1)\}\enspace.
\]
\begin{defi}[Finite bipartite $C$-colorings]
Fix a finite set $C$ called the set of \emph{colors}.
A \emph{finite bipartite $C$-coloring}
is a function  $f : [n] \times  [m] \to C$
where $n$ and $m$ are positive integers.
The \emph{initial and final colors} of $f$ are respectively $f(0,0)$ and  $f(n-1,m-1)$.
\end{defi}

We often use the simpler term \emph{coloring}
to refer to a finite bipartite $C$-coloring.

\begin{figure}
\includegraphics[width=5cm]{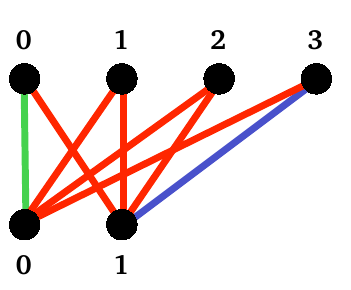}
\caption{\label{fig:triangles} A coloring $f : [n] \times  [m] \to C$ with $n=4$ and $m=2$.
The set $C$ contains three colours $R,G,B$ (red, green and blue).
The edge $(0,0)$ is green, the edge $(3,1)$ is blue and
all other edges are red.
There are three squares in $f$: the square $(G,R)=(f(0,0),f(1,1))$,
the square $(R,R)=(f(1,0),f(2,1))$ and the square $(R,B)=(f(2,0),f(3,1))$.
The pair $(G,R)$ is also both an upper-triangle $(f(0,0),f(1,0))$
and a lower-triangle  $(f(0,0),f(0,1))$.
The pair $(G,B)$ is neither a square nor a triangle,
since the corresponding edges are not adjacent.
}
\end{figure}

A coloring induces a set of patterns called
\emph{squares, upper-triangles
and lower-triangles}, this is illustrated on Fig.~\ref{fig:triangles}
and defined as follows.

\begin{defi}[Patterns induces by a coloring]
Let $f : [n] \times  [m] \to C$ be a coloring.
The \emph{patterns induced by $f$} are the following three subsets of $C^2$:
\begin{itemize}
    \item the squares of $f$ are all pairs
    $\left\{ (f(x,y),f(x+1,y+1)) \mid (x,y) \in [n-1]\times[m-1]\right\}$.
   \item the upper-triangles of $f$ are all pairs
     $\left\{ (f(x,y),f(x+1,y)) \mid (x,y) \in [n-1]\times[m]\right\}$.
  \item the lower-triangles of $f$ are all pairs
     $\left\{ (f(x,y),f(x,y+1)) \mid (x,y) \in [n]\times[m-1]\right\}$.
   \end{itemize}
\end{defi}

\noindent
The \lcp\ asks whether there exists a coloring satisfying some constraints on the initial and final colors and on the induced patterns.
A \emph{coloring constraint} is given
by three subsets $S,UT,LT$ of $C^2$, called the \emph{forbidden patterns}, and two subsets $C_i,C_f$ of $C$ called the sets of \emph{allowed initial and final} colors, respectively.
A coloring $f$ \emph{satisfies} the constraint if
its initial color
is in $C_i$, its final color is in $C_f$
and none of the patterns induced by $f$ is forbidden: no square of $f$ belongs to $S$, no upper-triangle of $f$ belongs to $UT$ and no lower-triangle of $f$ belongs to $LT$.

For example, the coloring depicted on Fig.~\ref{fig:triangles}
satisfies the constraint $C_i=\{G,R\}, C_f=\{B\}$ and $S=UT=LT=\{(B,G),(G,B)\}$.

\begin{defi}[\lcp]
Given a finite set of colors $C$ and a coloring constraint $(C_i,C_f,S,UT,LT)$,
decide whether there exists a finite bipartite $C$-coloring satisfying the constraint.
\end{defi}

\paragraph*{Two examples}
We illustrate the \lcp\  with two examples.
Set $C=\{0,+\}$ and consider the coloring constraint given
by $C_i=\{0\}$, $C_f=\{0,+\}$, $UT=\{(+,0),(0,0)\}$ $LT=\{(0,+)\}$ and $S=\emptyset$.
A coloring $f:[n]\times[m]\to C$ satisfies this constraint
iff the edges $(x,y)$ colored by $0$ are exactly those whose first coordinate is $x=0$.
The initial constraint enforces $f(0,0)=0$.
The upper-triangle constraint $UT$ prevents the symbol $0$ from appearing on any edge $(x,y)$ such that $x>0$.
By induction, the lower-triangle constraint $LT$ enforces
the symbol $0$ to appear
on any edge $(0,y)$.
One can extend this last example
on the product alphabet $\{0,+\}^2$ to enforce a coloring
to ``detect'' $0$ coordinates in both components $x$ and $y$.

Another example is $C=  \{0,-,+\}$ and
the coloring constraint given
by $C_i=\{0\}$, $C_f=\{0,-,+\}$,
$S=C^2 \setminus \{(0,0),(-,-),(+,+)\}$,
$UT= C^2 \setminus \{(0,+), (+,+), (-,0)\}$
and
$LT= C^2 \setminus \{(0,-), (-,-), (+,0)\}$.
We show that for every positive integers $n,m$,
there is a unique coloring $f:[n]\times[m]\to C$ which satisfies this constraint, and it is defined by
\[
f(x,y)=
\begin{cases}
0 & \text{ if } x = y\enspace,\\
+ & \text{ if } x > y\enspace,\\
- & \text{ if } x < y\enspace.
\end{cases}
\]
This definition of $f$ clearly satisfies the constraints,
for example the square constraint is satisfied since $x>y$ is equivalent
to $x+1 > y+1$.
To prove unicity,
remark first that $C_i$ enforces $f(0,0)=0$
and $S$ propagates this constraint to
every edge $(x,y)$ such that $x=y$.
The constraint $UT$ enforces $f(0,1)=+$ because $(0,+)$ is the only upper-triangle allowed with a $0$ on the left edge.
The $+$ is propagated by $S$ which enforces $f(x,x+1)=+$, whenever $x< \min(n,m-1)$.
And $UT$ enforces all edges $f(x,x+1),f(x,x+2),\ldots $ to be marked with $+$  because $(+,+)$ is the only upper-triangle allowed with a $+$ on the left edge.
Thus $f(x,y)=+$ whenever $x <y$. The case $x>y$ is symmetric, and the corresponding edges are marked by $-$.
If we remove the colors $-$ and $+$ from $C_f$, this creates the extra constraint $n=m$.

\medskip
The \lcp\ can encode problems far more complicated than these toy examples:
\begin{thm}\label{theo:reduction}
There is an effective reduction from the \PCP\ to the \lcp.
\end{thm}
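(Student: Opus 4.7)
The plan is to reduce the \PCP\ to the \lcp\ by interpreting a coloring $f:[n]\times[m]\to C$ as a two-dimensional encoding of a candidate PCP solution, with the $x$-axis tracking the concatenated top string $u_{i_1}\cdots u_{i_r}$ and the $y$-axis tracking the concatenated bottom string $v_{i_1}\cdots v_{i_r}$. The three kinds of local constraints---squares, upper-triangles and lower-triangles---will then be used to enforce all global PCP conditions.

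First, I reuse the second example above as a subcomponent of the color alphabet $C$ to force $n=m$ and to mark the diagonal cells $\{(x,x)\}$ by a $\{-,0,+\}$-valued flag encoding $\operatorname{sign}(x-y)$. The remainder of $C$ is a Cartesian product whose ``top'' factor records at each cell the tile index, within-tile offset and $\Sigma$-letter of the top string at position $x$, and whose symmetric ``bottom'' factor records the same data for the bottom at position $y$. LT constraints force the top factor to depend only on $x$, while UT constraints force the bottom factor to depend only on $y$. Further UT constraints encode the legal transitions of the top state as $x$ increases---either step inside the current tile $u_{i_j}$, or cross a boundary to position~$0$ of a freshly chosen next tile $u_{i_{j+1}}$---and symmetric LT constraints regulate the bottom. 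The initial and final color constraints pin down the proper start and termination. Finally, square constraints restricted to cells flagged as diagonal force the top and bottom $\Sigma$-letters to agree, expressing the PCP equality character-by-character.

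The main obstacle is enforcing that both sides use the \emph{same} sequence of tile indices $i_1,\ldots,i_r$. This is non-trivial because top and bottom tile boundaries generally occur at different positions on the diagonal, and the discrepancy between the top and bottom tile counters at a given position can grow arbitrarily, so a single bounded-state ``ledger'' on the diagonal cannot simply queue all committed-but-unmatched indices. My plan to overcome this is to thread a dedicated commitment component throughout the grid by combining all three kinds of constraints: the next tile index is written by whichever side reaches its boundary first, is propagated across a row or a column by UT and LT constraints until the other side catches up, and is then matched locally via a square constraint, after which the commitment can be forgotten. Setting up this propagation precisely is the main technical step; granting it, correctness follows by a straightforward induction on $r$, and effectiveness of the reduction is immediate from the construction.
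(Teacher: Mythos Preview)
Your plan closely parallels the paper's construction: encode the top word along $x$ via a factor $Q_u$ of $C$ kept constant in the $y$-direction by $LT$, the bottom word along $y$ via $Q_v$ kept constant in the $x$-direction by $UT$, mark the diagonal with a sign-type flag to force $n=m$, and compare $\Sigma$-letters on the diagonal. The paper does essentially this (with a binary diagonal flag rather than the three-valued sign, a cosmetic difference).

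The one real gap is the tile-index-matching step, which you correctly identify as ``the main technical step'' but only sketch. Your narrative---one side writes a commitment at its boundary, it is propagated along a row or column, the other side catches up and a square constraint matches it, then it is forgotten---is a sequential, one-commitment-at-a-time picture, and it runs straight into the unbounded-queue obstacle you yourself raised: several top boundaries can pass before any bottom boundary, so along any single row, column, or the diagonal, pending commitments pile up. The paper's way out is to drop the sequential view entirely. It adds a single binary flag $\yess$ at every cell, forces it at the final cell, and propagates it \emph{backward}: to $(x-1,y-1)$ via a square constraint when $(x,y)$ begins a new tile on both sides, to $(x-1,y)$ via $UT$ when it does not begin a new top tile, and to $(x,y-1)$ via $LT$ when it does not begin a new bottom tile. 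An easy induction shows that $\yess$ must then cover the union of rectangles $\{(x,y):\text{top is in its }k\text{-th-from-last tile at }x\text{ and bottom is in its }k\text{-th-from-last tile at }y\}$ over all $k$. Since the current tile index is already recorded in $Q_u$ and $Q_v$, it then suffices to delete from $C$ every $\yess$-color whose $Q_u$- and $Q_v$-tile-indices differ. No queue is needed because the flag is two-dimensional: distinct pending matches live in disjoint rectangles. Your ``thread throughout the grid'' intuition becomes exactly this once made precise, but the proposal as written does not yet supply the mechanism.
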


The proof of this theorem
proceeds in two steps: first characterize a solution to \PCP\
with local rewriting rules
and second express these local rewriting rules
as coloring constraints.
Remark that the reduction generates coloring constraints which allow only solutions
$f : [n] \times  [m] \to C$ such that $n=m$ (if any).
On purpose we do not impose this constraint in the definition of the \lcp.

\begin{proof}[Proof of Theorem~\ref{theo:reduction}]
An instance of \PCP\ is a finite collection $(u_i,v_i)_{i\in I}$
of tiles on an alphabet $\Sigma$.
Each tile $(u_i,v_i)$ is a pair of non-empty words on the alphabet $\Sigma$,
$u_i$ is the top word and $v_i$ is the bottom word.
The problem is to determine whether there exists a non-empty and finite sequence of
indices in $I$ such that the concatenation of the tiles produce the same two words
on the top and the bottom.
Duplicates are allowed: the same index $i$ can appear several times in the sequence.
In case such a sequence exists, the instance
$(u_i,v_i)_{i\in I}$ is said to have a solution.
Checking whether an instance has a solution is known to be an undecidable problem~\cite{pcpundec}.

The following result characterizes the existence of a solution to an instance of \PCP\@.

\begin{lem}[Local characterization of a PCP solution]\label{lem:caracpcp}
Let $(u_i,v_i)_{i\in I}$ be an instance of \PCP\ on the alphabet $\Sigma$. Let  $i_0,i_1,\ldots,i_k$ and
$j_0,j_1,\ldots,j_\ell$ be non-empty and finite sequences of indices
and $u= u_{i_0}\ldots u_{i_k}$
and $v= v_{j_0}\ldots v_{j_\ell}$
the corresponding tiles concatenations.
Denote $X=[|u|]\times[|v|]$.

The equality $u=v$ holds if and only if there exists a subset $\yes$ of $X$
    with the following properties.
    \begin{enumerate}
    \item
    $\yes$ contains $(|u|-1,|v|-1)$.
    \item
    Let $x=(x_u,x_v)\in \yes $
    such that $x_u>0$ and $x_v > 0$.
    Then
$(x_u -1,x_v -1) \in \yes$.
    \item
    $(0,0)$ belongs to $\yes$
    and this is the only element
    $x=(x_u,x_v)\in \yes$
    with either coordinate equal to $0$.
    \item
    For every $x=(x_u,x_v)\in \yes$ the letter of index $x_u$ in $u$ is the same as the letter of index $x_v$ in $v$.
    \end{enumerate}
The equality $i_0,i_1,\ldots,i_k = j_0,j_1,\ldots,j_\ell$
holds
if and only if there exists a subset $\yess$ of $X$ with the following properties:
\begin{enumerate}
       \item[(5)]
    $\yess$ contains $(|u|-1,|v|-1)$.
    \item[(6)]
     Let $x=(x_u,x_v)\in \yess$
    such that $x_u >0$ and
$x_v >0$.
Say that $x$ \emph{starts a new tile in $u$} if $x_u$ is the first index of a tile in the factorisation
$u= u_{i_0}\ldots u_{i_k}$
i.e.\ if there exists $0\leq m \leq k$
such that $x_u=|u_{i_0} \ldots u_{i_m}|$.
Say that $x$ \emph{starts a new tile in $v$} if the symmetrical condition holds for $x_v$ and
$v= v_{i_0}\ldots v_{j_\ell}$.
If $x$ starts a new tile in both $u$ and $v$ then $(x_u-1,x_v-1)\in \yess$.
If $x$ does \emph{not} start a new tile in $u$ then $(x_u-1,x_v)\in \yess$. If $x$ does \emph{not} start a new tile in $v$ then $(x_u,x_v-1)\in \yess$.
\item[(7)]
$(0,0)$ belongs to $\yess$.
For every $(x_u,x_v)\in \yess$,
($x_u < |u_{i_0}| \iff x_v < |v_{j_0}| $).
\item[(8)]
     Let $x=(x_u,x_v)\in \yess$ and denote $i_u(x)\in I$ the index of the tile
    appearing at index $x_u$ in the factorisation $u= u_{i_0}\ldots u_{i_k}$.
    Define $i_v(x)\in I$ symmetrically with respect to
    the factorisation $v= v_{i_0}\ldots v_{j_\ell}$.
    Then $i_u(x)$ and $i_v(x)$ are equal.

\end{enumerate}
\end{lem}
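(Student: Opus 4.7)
For both statements the \emph{forward} direction is straightforward. If $u=v$, then $|u|=|v|$ and the diagonal $\yes = \{(i,i) \mid 0 \le i < |u|\}$ satisfies (1)--(4) by inspection. For the second claim, let $\phi(x)$ (resp.~$\psi(y)$) denote the index $m$ such that $x$ (resp.~$y$) lies inside the $m$-th tile of the factorisation $u = u_{i_0}\cdots u_{i_k}$ (resp.~$v = v_{j_0}\cdots v_{j_\ell}$). When the two index sequences coincide, the tile boundaries match ($|u_{i_m}| = |v_{j_m}|$ for every $m$), and the union of diagonal tile cells $\yess = \{(x_u,x_v) \mid \phi(x_u)=\psi(x_v)\}$ satisfies (5)--(8) by a direct check.

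The backward direction of the first claim is also short. Starting from $(|u|-1,|v|-1) \in \yes$ guaranteed by (1), condition (2) iteratively produces the diagonal predecessors $(|u|-1-t,|v|-1-t)\in \yes$. Condition (3) forbids non-trivial boundary elements, so the two coordinates must reach $0$ simultaneously, forcing $|u|=|v|$. The entire diagonal then lies in $\yes$ and (4) yields $u = v$ letter by letter.

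For the backward direction of the second claim, my plan is to fix a specific backward trajectory starting from $(|u|-1,|v|-1)$ by the rule: as long as $x_u, x_v > 0$, follow a predecessor prescribed by (6), and in the case where neither coordinate starts a new tile, break the tie by moving horizontally whenever possible. A case analysis of (6) within a single tile cell shows that this trajectory slides to the left edge of the cell, then down to its top-left corner, from which a diagonal step carries it into the bottom-right corner of the next diagonal cell. The key invariant is that $\phi(x_u)-\psi(x_v)$ is preserved at every step: a diagonal step happens exactly when both coordinates start a new tile and then both indices drop by one, while every non-diagonal step stays inside a single tile cell and changes neither index.

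The trajectory must eventually halt at a point with at least one zero coordinate. At this halting point, condition (7) forces $\phi(x_u) = \psi(x_v) = 0$: if for instance $x_v = 0$, then $0 < |v_{j_0}|$ yields $x_u < |u_{i_0}|$, whence $\phi(x_u) = 0$. Combined with the invariance of $\phi-\psi$, this gives $k=\ell$. Moreover, the successive diagonal steps make $\phi$ take all values $k, k-1, \ldots, 0$ in turn, and condition (8) applied to any element of the trajectory lying in tile cell $(m,m)$ yields $i_m = j_m$. The main obstacle lies in this last direction: one must argue that, despite the arbitrariness left by (6) in the ``neither starts a new tile'' case, the backward trajectory can be driven to the corner of each successive diagonal cell, and that condition (7) closes the escape through the boundary whenever $k \neq \ell$.
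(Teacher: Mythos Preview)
Your argument is correct. The first equivalence and the forward direction of the second one are essentially identical to the paper's proof (your set $\{(x_u,x_v)\mid \phi(x_u)=\psi(x_v)\}$ is exactly the paper's witness set $A$, described from the front rather than from the back).

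For the backward direction of the second equivalence you take a genuinely different route. The paper argues globally: it shows that \emph{all} of $A$---the union of the anti-diagonal tile rectangles $A_0,\ldots,A_{\min(k,\ell)}$---is contained in $\yess$, by taking a lexicographically maximal missing element and deriving a contradiction from~(5) and~(6). Once $A\subseteq\yess$, it picks the element $(0,|v_{j_0}\cdots v_{j_{\ell-k}}|-1)\in A$, applies~(7) to force $k=\ell$, and finishes with~(8). You instead follow a single backward trajectory from $(|u|-1,|v|-1)$ using the predecessors supplied by~(6), and carry the scalar invariant $\phi(x_u)-\psi(x_v)$; at the boundary~(7) pins this invariant to~$0$, hence $k=\ell$, and~(8) along the visited diagonal cells gives $i_m=j_m$. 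Your approach avoids the lexicographic induction and the need to name the whole set $A$, at the price of specifying a tie-breaking rule and checking the invariant case by case. The paper's approach establishes more (a full inclusion $A\subseteq\yess$) and is perhaps structurally cleaner, but yours is more direct and arguably more elementary. Both are short and either would be acceptable here.
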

\begin{proof}
The proof is elementary.

We prove the first part of the lemma: the characterisation of $u=v$.
For the direct implication, assume $u=v$.
Then the set  $\yes = \{ (x,x)  \mid 0\leq x < |u|=|v|\}$ has properties 1.\ to 4.
In the opposite direction, assume
there exists $\yes \subseteq X$ with properties 1.\ to 4.
We show that:
\begin{align}%
\label{eq:ij}
\text{
for every $0< i\leq \min(|u|,|v|)$,
$\yes$ contains $(i,i)$
}\enspace.
\end{align}
Assume w.l.o.g.\ that $|u|=\min(|u|,|v|)$.
A simple induction shows that for every $0< i\leq \min(|u|,|v|)$,
$\yes$ contains $(|u|-i,|v|-i)$:
the case $i=1$ follows from property 1.\ and the induction step from property 2. Then $(0,|v|-|u|)\in \yes$
thus, by property 3, $|u|=|v|$.
Hence property~\eqref{eq:ij}.
Finally $u=v$ follows by property 4.

We now prove the second part of the lemma: the characterisation of $i_0,i_1,\ldots,i_k = j_0,j_1,\ldots,j_\ell$.
The proof relies on the subset $A \subseteq X$
containing all pairs $(x_u,x_v)\in X$ such that the number of tiles appearing \emph{after} position $x_u$ in $u$
is the same that the number of tiles appearing \emph{after} position $x_v$ in $v$. Formally,
\begin{multline*}
A = \{ (x_u,x_v)\in X\mid
\exists m \in 0,\ldots, \min(k,\ell)\\
|u_{i_0}\cdots u_{i_{k - m-1}}|
\leq x_u
< |u_{i_0}\cdots u_{i_{k-m}}|
\text{ and }
|v_{j_0}\cdots v_{j_{\ell - m-1}}|
\leq  x_v
< |v_{j_0}\cdots v_{j_{\ell  - m}}|
\}
\enspace.
\end{multline*}
Remember that all the words $u_i,i\in I$ and $v_i,i\in I$ appearing in the tiles are non-empty.
Thus, $A$ contains $(|u|-1,|v|-1)$. And $A$ contains $(0,0)$
iff $k=\ell$.
Following its definition parametrized by $m\in 0\ldots \min(k,\ell)$, $A$ is partitioned in $A_0, \ldots, A_{\min(k,\ell)}$.

For the direct implication,
Assume $i_0,i_1,\ldots,i_k = j_0,j_1,\ldots,j_\ell$ (hence in particular $k=\ell$).
Then the set $\yess=A$
clearly satisfies all properties (5) to (8).

\medskip

Conversely, assume $\yess\subseteq X$ satisfies all properties (5) to (8), and show that $i_0,i_1,\ldots,i_k = j_0,j_1,\ldots,j_\ell$.

We show first that $\yess$ contains all elements in $A$.
Assume by contradiction that some element $x$ in $A$ is missing from $\yess$, and assume $x$ is maximal for the lexicographic ordering in $A\setminus \yess$.
Let $m$ such that $x \in A_m$.
The set $A_m$ is the cartesian product of two intervals,
hence it has a maximum, namely
$x_m=(|u_{i_0} \ldots u_{i_{k - m}}|-1, |u_{i_0} \ldots u_{i_{\ell - m}}|-1)$.
There are two cases, depending whether $x=x_m$ or not.
Assume first $x=x_m$.
According to property (5),
the maximal element $(|u|-1,|v|-1)$ in $A$
belongs to $\yess$  thus $x=x_m \neq (|u|-1,|v|-1)$.
Since $(|u|-1,|v|-1) \in A_0$ then  $m > 0$. As a consequence, $(x_u+1,x_v+1)$ is the minimal
element in $A_{m-1}$ and by induction hypothesis,
$(x_u+1,x_v+1)\in \yess$.
Moreover,  $(x_u+1,x_v+1)$ starts a new tile in both $u$ and $v$
and by property (6), we get $x\in \yess$ as well.
The second case is when $x$ is not maximum in $A_m$.
Since $A_m$ is the cartesian product of two intervals, either
$(x_u+1,x_v)$ or $(x_u,x_v+1)$  is in $A_m$ as well, and $x$ does not start a new tile on the corresponding coordinate. We conclude that $x\in\yess$ using the induction hypothesis and property (6).

Now we show $k=\ell$.
Assume w.l.o.g. $k=\min(k,\ell)$.
Since $\yess$ contains all $A$ then in particular $\yess$ contains
$(0,|v_{j_0}\cdots v _{j_{\ell - k}}| -1)$ (take $m=k$ in the definition of $A$).
According to property (7),
$|v_{j_0}\cdots v_{j_{\ell - k}}| \leq |v_{j_0}| $ thus $k=\ell$.

Using property (8), we conclude  that $i_0,i_1,\ldots,i_k = j_0,j_1,\ldots,j_\ell$.
\end{proof}

We now define an instance of the \lcp\ encoding the constraints in Lemma~\ref{lem:caracpcp}. We start with defining the set of colors.
Let $Q_u$ be the finite subset of $\Sigma\times I \times \NN\times\{0,1\}^2$
containing all tuples $(a,i,m,b_0,b_1)$
where $m < |u_i|$ and $a$ is the $m$-th letter of $u_i$.
Given $q=(a,i,m,b_0,b_1)\in Q_u$,
$a$ is called the letter of $q$, $i$ its tile index, $m$ its tile position, $b_0$ its initial letter flag and $b_1$ its initial tile flag.
The tile position is maximal in $q$ if it is equal to $|u_i|-1$.
Define $Q_v$ similarly with respect to the bottom parts of the tiles $(v_i)_{i\in I}$.
Let $C$ be the set of colors
\[
Q_u\times Q_v \times \{\yes,\no\}\times\{\yess,\noo\}
\enspace.\]

We define a constraint coloring problem reflecting conditions (1) to (8) of the previous lemma. We use $*$ as a wildcard character which can be replaced by any symbol.

First, we set constraints
which enforce the
$Q_u$-component of $f(x_u,x_v)$
to depend only on $x_u$
and
the $Q_v$-component of $f(x_u,x_v)$
to depend only on $x_v$.
For that we include in the set of forbidden
lower-triangle $LT$ all pairs
$(q_u,*,*,*)(q_u',*,*,*)$
with $q_u\neq q_u'$.
A simple induction shows that
if a coloring $f:[n]\times[m]\to C$ does not induce such lower-triangles pattern, then  there exists $f_u : [n] \to Q_u$
such that
 for every $x=(x_u,x_v)\in[n]\times[m] $,
 the first component of $f(x)$
 is $f_u(x_u)$.
 Symmetrically,
 we add to the set of forbidden upper-triangles $UT$ all pairs
 $(*,q_v,*,*)(*,q_v',*,*)$
with $q_v\neq q_v'$
so that the $Q_v$-component of $f$
is induced by  $f_v : [m] \to Q_v$.

Denote $u\in \Sigma^n$ to be the word
 of length $n$ whose $x_u$-th letter
 is the first component of $f_u(x_u)\in Q_u$.
 Define $v\in \Sigma^m$ symmetrically with respect to $f_v$.

 Second, we want to ensure that $u$ is a concatenation
 of upper tiles of the PCP instance,
 i.e.
 there exists $i_0,i_1,\ldots , i_k$ such that
$u = u_{i_0} \cdots u_{i_k}$.
This regular constraint can be checked by
a deterministic automaton on $Q_u$,
which we implement using the coloring constraints.
First, we require that in every initial color $(q,*,*,*)$, the tile position in $q$ is $0$.
Moreover, we put constraints on all upper-triangles
$(q,*,*,*),(q',*,*,*)$:
either the tile position is maximal in $q$ and has value $0$ in $q'$, or the tile index does not change between $q$ and $q'$ while the tile position is incremented of exactly one unit.
Finally, we require that in every final color $(q,*,*,*)$, the tile position in $q$ is maximal.
With a symmetric construction,
we ensure  the symmetric constraint on $v$:
there exists $j_0,j_1,\ldots , j_\ell$ such that
$v = v_{j_0} \cdots v_{j_\ell}$.

 Third, we want that the initial letter flag
 and the initial tile flag of
$f_u(x_u)$ indicate respectively whether or not $x_u=0$
and
 whether or not $x_u < |u_{i_0}|$.
We require that all initial colors have both flags set to $1$.
We forbid any upper triangle $(q_u,*,*,*)(q_u',*,*,*)$ where $q_u'$ has the initial letter flag set to $1$.
We forbid any upper triangle $(q_u,*,*,*)(q_u',*,*,*)$
where $q_u'$ has the initial tile flag set to $1$
unless $q_u$ has also the initial tile flag set to $1$
and the tile position in $q_u$ is not maximal.
A symmetric construction ensures that
the initial letter and tile flags of
$f_v(u_v)$ indicate respectively whether or not $u_v=0$ and $u_v< |v_{j_0}|$.

This way we have a natural correspondence between finite bipartite coloring satisfying the above constraints
and pairs of factorisations
$u = u_{i_0} \cdots u_{i_k}$ and
$v = v_{j_0} \cdots v_{j_\ell}$
like in Lemma~\ref{lem:caracpcp}.
Using this correspondence,
we can rephrase the eight conditions
in Lemma~\ref{lem:caracpcp} as eight coloring constraints on
a set of colours $C'\subseteq C$.
\begin{enumerate}
\item
Every final colour has type $(*,*,\yes,*)$.
\item
Forbid all squares $(*,*,\no,*),(*,*,\yes,*)$.
\item
Every initial colour has type $(q_u,q_v,\yes,*)$, with both initial letter flags in $q_u$ and $q_v$ set to $1$.
Remove from the alphabet $C$ any color
$(q_u,q_v,\yes,*)$
such that the initial letter flags of $q_u$ and $q_v$ are different.
\item
Remove from the alphabet $C$ any color
$(q_u,q_v,\yes,*)$
such that the letters in $q_u$
and $q_v$ (i.e.\ their projections on $\Sigma$) are different.
\item
Every final colour has type $(*,*,*,\yess)$.
\item
Forbid all squares $(*,*,*,\noo)$,$(q_u',q_v',*,\yess)$
such that both tile positions in $q_u'$ and $q_v'$ are $0$.
Forbid all upper-triangles $(*,*,*,\noo)$,$(q'_u,*,*,\yess)$
unless the tile position in $q'_u$ is $0$.
Forbid all lower-triangles $(*,*,*,\noo)$,$(*,q_v',*,\yess)$
unless the tile position in $q_v'$ is $0$.
\item
Every initial colour has type $(q_u,q_v,*,\yess)$, with both initial tile flags in $q_u$ and $q_v$ set to $1$.
Remove from the alphabet $C$ any color $(q_u,q_v,*,\yess)$
such that the initial tile flags in $q_u$ and $q_v$ are different.
\item
Remove from the alphabet $C$ any color
$(q_u,q_v,*,\yess)$
such that the tile indices in $q_u$
and $q_v$ are different.
\end{enumerate}

\noindent
Applying conditions,
3, 4, 7 and 8, we get a set of colours $C'\subset C$
on which the constraint coloring problem is defined.
The initial condition $C_i$ is the set of colours in $C'$ of type
$(q_u,q_v,\yes,\yess)$
such that in both
$q_u$ and $q_v$,
the tile position is $0$
and both initial letter flags and initial tile flags are set to $1$.
The final condition $C_f$ is the set of colours
of type $(q_u,q_v,\yes,\yess)$ where the tile positions in both $q_u$
and $q_v$ are maximal.
The set of forbidden patterns is obtained by taking the union of all forbidden squares, upper-triangles and lower-triangles mentioned
above.

According to Lemma~\ref{lem:caracpcp},
there exists a coloring satisfies the constraints
if and only if the PCP instance has a solution.
This terminates the proof of Theorem~\ref{theo:reduction}.
\end{proof}

A direct consequence of Theorem~\ref{theo:reduction}
is that
the \lcp\ is undecidable.

\section{Undecidability of the Distributed Synthesis Problem with \texorpdfstring{$6$}{6} processes}%
\label{sec:undec}

The main result of this section is:
\begin{thm}\label{theo:lcptodsp}
There is an effective reduction from the \lcp\
to the \dsp\ with six processes.
\end{thm}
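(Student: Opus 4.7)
The plan is to build, from an instance $(C, C_i, C_f, S, UT, LT)$ of the \lcp, a distributed game with six processes such that a winning strategy exists if and only if a valid finite bipartite $C$-coloring exists. The six processes will be split according to their roles: two ``coordinate'' processes that, driven by uncontrollable environment actions, commit step by step to a pair of indices $(x,y) \in [n]\times [m]$; two ``color'' processes that, on each play, announce by a controllable action the color $f(x,y)$ sitting at the current position; and two auxiliary ``checker'' processes that let the environment nondeterministically challenge the coloring with one of the five local checks (initial, final, square, upper-triangle, lower-triangle). The acceptance conditions are designed so that a process reaches its local final state exactly when the color it sees passes the challenge the environment selected.

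The forward direction is routine. Given a coloring $f$ satisfying the constraint, I would define a distributed strategy where each color process, on a play whose causal past determines a committed coordinate $(x,y)$, enables precisely the controllable action corresponding to $f(x,y)$. Because $f$ satisfies all the forbidden-pattern constraints, every challenge the environment can pose ends in a locally accepting state on each process, and the play terminates; hence the strategy is winning.

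The hard direction is extracting a single, well-defined $f : [n] \times [m] \to C$ from an arbitrary winning strategy, and this is where the six-process architecture and causal memory must work in concert. The key design requirement is that the dependency relation allows the environment to generate, from a common causal past that fixes a position $(x,y)$, two \emph{parallel} continuations (in the sense of the parallel prime traces of Figure~\ref{fig:example}) along which two different color processes are forced to independently output the colors at two adjacent cells, say $(x,y)$ and $(x{+}1,y{+}1)$ for a square check. Because the two branches are parallel, no color process can observe the other's output through its $\view_p$ before committing, so the strategy's choices depend only on $(x,y)$, and not on which pattern the environment is about to test. This yields a well-defined function $f$; and then each choice of challenge by the environment is translated into a local constraint that must be satisfied, otherwise the corresponding maximal $\sigma$-play would not end in a final state, contradicting that $\sigma$ is winning.

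The main obstacle is the simultaneous satisfaction of three tensions in the construction. First, the two coordinate processes must be drivable to \emph{every} pair $(x,y)$; second, for each chosen $(x,y)$ and each of the three neighbor-based checks the environment must be able to force color revelations on \emph{two} cells along a pair of causally parallel continuations; third, these parallel branches must nevertheless re-synchronize through the checker processes so that the local acceptance condition witnesses whether the revealed pair of colors is forbidden. Balancing these requirements is precisely what forces six processes rather than four: the decomposition theorem of~\cite{DBLP:conf/fsttcs/Gimbert17} shows that with four processes the information structure collapses and parallel branches cannot be kept genuinely independent while still letting the environment test arbitrary local patterns. The bulk of the proof will be verifying, by a careful case analysis on the dependency graph, that the intended parallel-branch information flow actually holds in the constructed automaton.
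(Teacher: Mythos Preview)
Your proposal has the right high-level shape (build a game whose winning strategies encode valid colorings, use parallel prime traces to force independent color commitments at adjacent cells), but the concrete six-process architecture you sketch is not the one that works, and the sketch leaves open exactly the points where the construction is delicate.

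First, the role assignment is inverted. You have the environment driving the coordinate processes to a position $(x,y)$ and the players then announcing a color. But then nothing bounds the domain: the environment could keep incrementing forever, and to win the players would need a coloring of $\NN\times\NN$, not of some finite $[n]\times[m]$. In the paper's construction it is the \emph{players} who control the increments and decide when to stop (thereby choosing $n$ and $m$), while the environment's only uncontrollable move is to interrupt with a \emph{check} at a moment of its choosing. This is what makes the existence of a winning strategy equivalent to the existence of a \emph{finite} coloring.

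Second, your $2{+}2{+}2$ split (coordinate/color/checker) does not obviously support the parallel-branch mechanism you need. To test a square you must obtain, in a single play, two independent color revelations at $(x,y)$ and $(x{+}1,y{+}1)$, with neither revealing pair seeing the other's answer before committing. The paper achieves this with a different, symmetric architecture: two \emph{pools} of three processes each, $T_0,T_1,T_2$ and $B_0,B_1,B_2$. Within a pool the three processes pass a token in a cyclic order $\increment_{X,0}\increment_{X,1}\increment_{X,2}$, so that at any instant two processes $X_\ell$ and $X_{\ell'}$ are either in the same round or exactly one round apart. A check $\chk_\ell$ synchronizes $T_\ell$ with $B_\ell$ \emph{across} pools; two such checks $\chk_\ell$ and $\chk_{\ell'}$ are causally parallel, and the four combinations (same/one-ahead in each pool) correspond exactly to consistency, squares, upper-triangles, and lower-triangles. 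Three processes per pool are what make ``same round'' and ``one round ahead'' simultaneously realizable for some pair $\ell,\ell'$; with only two per pool this collapses. Your proposal names the tension but does not supply this mechanism, and without it the ``careful case analysis on the dependency graph'' you defer to cannot succeed.
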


We show how to effectively transform
an instance $(C_i,C_f,S,UT,LT)$ of the \lcp\ on a set of colors $C$
into a distributed game $G$
such that:
\begin{lem}\label{lem:equiv}
There is a winning strategy in $G$ if and only if
there is a finite bipartite coloring satisfying the constraints $(C_i,C_f,S,UT,LT)$.
\end{lem}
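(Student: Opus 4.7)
The plan is to reduce an instance $(C, C_i, C_f, S, UT, LT)$ of the \lcp\ to a distributed game $G$ whose winning strategies are in bijection with satisfying colorings $f : [n]\times[m]\to C$. Intuitively, I will partition the six processes into two groups of three that are responsible for the $x$-coordinate and the $y$-coordinate of the grid, respectively, together with a shared action to synchronize on individual cells. The environment will drive the two coordinates independently (by issuing uncontrollable increments), forcing the strategy to commit to a color $c \in C$ at each visited cell via a controllable action. Final states encode that a coordinate has stopped growing and that the last chosen color lies in $C_f$; initial constraints on $C_i$ are enforced at the first cell. The critical design point is that the only communication between the $x$-group and the $y$-group happens through the ``commit-color'' action at the current cell, so that the color a process announces depends, by causal-memory, only on information local to its own coordinate.

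For the forward direction of Lemma~\ref{lem:equiv}, given a winning distributed strategy $\sigma$, I will extract a coloring $f$. The key step is a commutation argument: the game is arranged so that the environment can, from any reached cell $(x,y)$, choose between four independent continuations corresponding to the neighbours $(x{+}1,y)$, $(x,y{+}1)$, $(x{+}1,y{+}1)$, and a ``halting'' branch used to validate the current cell. Because the causal past of a process at $(x,y)$ is the same across all these continuations, the color committed at $(x,y)$ does not depend on the environment's future choices; this gives a well-defined $f(x,y)$. Since $\sigma$ is winning against every environment play, the validating branch rules out all forbidden patterns: if $\sigma$ were to commit to a square, upper-triangle, or lower-triangle in $S\cup UT\cup LT$ at two adjacent cells, the corresponding environment strategy would steer the play into a non-final deadlock. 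Initial/final conditions on $C_i$ and $C_f$ follow in the same way.

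For the backward direction, given a satisfying coloring $f$, I will define a strategy that, at each cell $(x,y)$, instructs the relevant processes to unblock only the controllable action corresponding to the color $f(x,y)$. Because causal views on each coordinate can reconstruct the position $(x,y)$ from the history of environment increments, this strategy is well-defined. The fact that $f$ satisfies all local constraints then means every maximal $\sigma$-play ends in a global final state: the environment can never trigger a forbidden-pattern detector. Finiteness of the set of $\sigma$-plays follows from the finiteness of $[n]\times[m]$ and the determinism induced by $\sigma$.

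The main obstacle will be designing the game so that four continuations branching out of a cell truly are pairwise independent for the strategy's commit choice, while simultaneously allowing the environment to \emph{compare} the color at $(x,y)$ with the color at a chosen neighbour to enforce the square, upper-triangle, and lower-triangle constraints. This is precisely why four processes do not suffice (cf.~\cite{DBLP:conf/fsttcs/Gimbert17}): one needs two separate processes per coordinate to decouple the act of committing a color from the act of increasing the coordinate, plus the synchronizing pair to carry out the pattern check. Making these independences precise, and verifying that the resulting trace structure truly forces $\sigma$ to act as a function of position alone, will be the technically delicate point.
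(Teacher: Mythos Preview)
Your plan diverges from the paper's construction in a way that breaks the reduction. The concrete problem is the role you assign to increments: you make them \emph{uncontrollable} environment actions, whereas in the paper's game $G$ the increments $\increment_{X,\ell}$ are \emph{controllable} and the only uncontrollable actions are the checks $\chk_\ell$. This matters because the \lcp\ is existential in the grid dimensions: the players must be free to commit to some $n,m$ and then defend a coloring $f:[n]\times[m]\to C$. If the environment drives the coordinates it can push $x$ or $y$ past any bound the players had in mind, and your backward-direction claim that ``finiteness of the set of $\sigma$-plays follows from the finiteness of $[n]\times[m]$'' has no force --- there is no $n,m$ the environment is obliged to respect, so a satisfying coloring does not yield a winning strategy in your game.

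The second gap is the comparison mechanism. You assert that the environment can ``compare the color at $(x,y)$ with the color at a chosen neighbour'', but you never explain how two colours, committed at different moments by different processes, are confronted without one pair seeing the other's answer first. The paper's device is specific: the three processes of each pool cycle through $\increment_{X,0},\increment_{X,1},\increment_{X,2}$, and $\chk_\ell$ synchronises $T_\ell$ with $B_\ell$. Since $\chk_\ell$ and $\chk_{\ell'}$ with $\ell\neq\ell'$ touch disjoint process sets, two checks can occur \emph{in parallel}, with one of them possibly a full round ahead of the other in either pool; the global $\lose$ transition then compares the two parallel answers. The forward implication is proved not by a single commutation but by a \emph{chain} of overlapping parallel checks $p_1,p_2,p_0',p_1'$ (Figures~\ref{fig:squaretrans} and~\ref{fig:squareup}): condition~(c) propagates the answer along same-round pairs $p_1\!\parallel\!p_2$ and $p_0'\!\parallel\!p_1'$, while conditions~(d)--(f) applied to the pair $p_2\!\parallel\!p_0'$ rule out forbidden patterns across the round boundary. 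Your ``four independent continuations'' and ``commutation argument'' establish at best that the commit at $(x,y)$ is independent of \emph{later} environment moves; they do not tie that commit to the one made at $(x{+}1,y{+}1)$ by a \emph{different} pair of processes with an incomparable causal past, which is where the real work lies and the reason three processes per pool are needed.
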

In the rest of the section we describe the construction of the game $G$ and then prove
Lemma~\ref{lem:equiv}.

\subsection{Turning a coloring problem into a game}

The game $G$ has six processes divided into two pools: the top pool $T_0,T_1,T_2$ and the bottom pool $B_0,B_1, B_2$.
The transitions of these processes are quite similar,
thus we often use the notation $X_\ell$ to designate any of the six processes,
a generic notation where $X\in\{T,B\}$ and $\ell\in\{0,1,2\}$.
For every process $X_\ell$ we denote $\nex(X_\ell)$
the process $X_{(\ell +1) \mod 3}$
and we denote
$\prev(X_\ell)$
the process $X_{(\ell -1) \mod 3}$.
For example, $\nex(T_0)=T_1$ and $\prev(B_0)=B_2$.

\paragraph*{Uninterrupted plays.}
These rare plays where only the following actions are used:

\begin{itemize}
\item
There are controllable actions called \emph{increments},
which synchronize two processes of the same pool.
In each pool $X\in\{T,B\}$
there are three increment actions
$\increment_{X,0},\increment_{X,1}$ and $\increment_{X,2}$.
Every increment $\increment_{X,\ell}$
is controllable and
is shared between processes $X_\ell$ and $\nex(X_{\ell})$.
\item
Every process $X_\ell$ has also a controllable local  action $\terminate_{X,\ell}$.
\item
There is a controllable action \win\ synchronizing all six processes.
\end{itemize}

\noindent
The transitions are defined so that every maximal uninterrupted play
looks like the one on Figure~\ref{fig:uninterrupted}:
it
 is the parallel product of two plays
$(\increment_{T,0}\increment_{T,1}\increment_{T,2})^n$
and
$(\increment_{B,0}\increment_{B,1}\increment_{B,2})^m$
for some positive integers $n,m$,
followed by the six local $\terminate$ actions and finally the global action $\win$.

Since all the actions in an uninterrupted play are controllable,
the number of rounds performed before playing actions \terminate\ is
determined by the players
and is not influenced by
the environment.
Intuitively, when the number of rounds of the top and bottom pools are  $n$ and $m$, respectively,
the players claim to have a solution $f:[n]\times [m] \to C$ to the constrained coloring problem.

\begin{figure}[ht]
    \begin{tikzpicture}
\begin{scope}[scale=1]

\newcommand{\eeee}{mycolor}
\newcommand{\eeeeee}{mycolor}
\newcommand{\eeeee}{1.5cm}
\newcommand{\vfour}{black}
\newcommand{\vfourr}{mycolor}
\newcommand{\colw}{black}
\newcommand{\colwin}{ForestGreen}
\newcommand{\colend}{orange}
\newcommand{\spec}{\vfour}
\newcommand{\ff}[1]{{}}

\newcommand{\vdist}{.6cm}
\newcommand{\hdist}{.64cm}

\tikzstyle{every node}=[node distance=\vdist]
\node(lab1) at (1,-1) {$T_0$ };
\node(lab2) [below of=lab1] {$T_1$ };
\node(lab3) [below of=lab2] {$T_2$ };
\node(lab0) [below of=lab3] { };
\node(lab4) [below of=lab0] {$B_0$ };
\node(lab5) [below of=lab4] {$B_1$ };
\node(lab6) [below of=lab5] {$B_2$ };

\tikzstyle{every node}=[node distance=.2cm]
\node(l1) [right of=lab1] {};
\node(l2) [right of=lab2] {};
\node(l3) [right of=lab3] {};
\node(l4) [right of=lab4] {};
\node(l5) [right of=lab5] {};
\node(l6) [right of=lab6] {};

\tikzstyle{every node}=[node distance=14.5cm]
\node(r1) [right of=l1] {};
\draw[gray] (l1) -- (r1);
\node(r2) [right of=l2] {};
\draw[gray] (l2) -- (r2);
\node(r3) [right of=l3] {};
\draw[gray] (l3) -- (r3);
\node(r4) [right of=l4] {};
\draw[gray] (l4) -- (r4);
\node(r5) [right of=l5] {};
\draw[gray] (l5) -- (r5);
\node(r6) [right of=l6] {};
\draw[gray] (l6) -- (r6);

\newcommand{\oneround}{
\tikzstyle{every state}=[fill=black,draw=none,inner sep=0pt,minimum size=0.2cm]
\tikzstyle{every node}=[node distance=\hdist]
\node[state](a0)[fill=\vfour] at (0,0) {};
\node(dummy)[right of=a0] {};
\node[state](c0)[right of=dummy] {};
\tikzstyle{every node}=[node distance=\vdist]
\node[state](a1)[below of=a0]{};
\draw[\colw] (a0) -- (a1);
\tikzstyle{every node}=[node distance=\hdist]
\node[state](b1)[right of=a1]{};
\tikzstyle{every node}=[node distance=\vdist]
\node[state](b2)[below of=b1]{};
\draw[\colw] (b1) -- (b2);
\tikzstyle{every node}=[node distance=\hdist]
\node[state](c2)[right of=b2]{};
\draw[\colw] (c0) -- (c2);
\tikzstyle{every node}=[node distance=0.5*\hdist]
\node(sep0)[right of=c0]{};
\node(sep1)[right of=c2]{};
\tikzstyle{every node}=[node distance=0.8*\vdist]
\node(sepa0)[above of=sep0]{};
\node(sepa1)[below of=sep1]{};
\draw[dashed, \colw] (sepa0) -- (sepa1);
}

\newcommand{\totoend}{
\tikzstyle{every state}=[fill=\colend,draw=none,inner sep=0pt,minimum size=0.2cm]
\tikzstyle{every node}=[node distance=\vdist]
\node[state](e0)at (0,0) {};
\node[state](e1)[below of=e0] {};
\node[state](e2)[below of=e1]{};
}

\newcommand{\totowin}{
\tikzstyle{every state}=[fill=\colwin,draw=none,inner sep=0pt,minimum size=0.2cm]
\tikzstyle{every node}=[node distance=\vdist]
\node[state](e0)at (0,0) {};
\node[state](e1)[below of=e0] {};
\node[state](e2)[below of=e1]{};
\node(dummy)[below of=e2]{};
\node[state](eb0)[below of=dummy] {};
\node[state](eb1)[below of=eb0] {};
\node[state](eb2)[below of=eb1]{};
\draw[\colwin] (e0) -- (e1) -- (e2) -- (eb0) -- (eb1) -- (eb2) ;
}

\begin{scope}[shift={($(l4)+(\hdist,0)$)}]
\oneround
\node(lab1) at (0,0.5) {$\increment_{B,0}$ };
\node(lab2) at (\hdist,-0.2) {$\increment_{B,1}$ };
\node(lab3) at (2 * \hdist,0.5 ) {$\increment_{B,2}$ };
\begin{scope}[shift={($(3*\hdist,0)$)}]
\oneround
\begin{scope}[shift={($(3*\hdist,0)$)}]
\oneround
\begin{scope}[shift={($(3*\hdist,0)$)}]
\oneround
\begin{scope}[shift={($(3*\hdist,0)$)}]
\oneround
\begin{scope}[shift={($(3*\hdist,0)$)}]
\oneround
\begin{scope}[shift={($(3*\hdist,0)$)}]
\totoend
\node(lab1) at (0.7cm,0.2cm) {$\textcolor{\colend}{\terminate_{B,0}}$ };
\node(lab2) at (0.7cm,0.2cm - \vdist) {$\textcolor{\colend}{\terminate_{B,1}}$ };
\node(lab3) at (0.7cm,0.2cm - \vdist- \vdist) {$\textcolor{\colend}{\terminate_{B,2}}$ };
\begin{scope}[shift={($(2.3*\hdist,4*\vdist)$)}]
\totowin
\node(lab1) at (0.5,-3*\vdist) {$\textcolor{\colwin}{\win}$ };
\end{scope}
\end{scope}
\end{scope}
\end{scope}
\end{scope}
\end{scope}
\end{scope}
\end{scope}

\begin{scope}[shift={($(l1)+(\hdist,0)$)}]
\oneround
\node(lab1) at (0,0.5) {$\increment_{T,0}$ };
\node(lab2) at (\hdist,-0.2) {$\increment_{T,1}$ };
\node(lab3) at (2 * \hdist,0.5 ) {$\increment_{T,2}$ };

\begin{scope}[shift={($(3*\hdist,0)$)}]
\oneround
\begin{scope}[shift={($(3*\hdist,0)$)}]
\oneround
\begin{scope}[shift={($(3*\hdist,0)$)}]
\totoend
\node(lab1) at (0.7cm,0.2cm) {$\textcolor{\colend}{\terminate_{T,0}}$ };
\node(lab2) at (0.7cm,0.2cm - \vdist) {$\textcolor{\colend}{\terminate_{T,1}}$ };
\node(lab3) at (0.7cm,0.2cm - \vdist- \vdist) {$\textcolor{\colend}{\terminate_{T,2}}$ };
\end{scope}
\end{scope}
\end{scope}
\end{scope}

\end{scope}
\end{tikzpicture}

    \caption{A maximal uninterrupted play where the top pool plays $3$ rounds and the bottom pool plays $6$ rounds. Rounds are separated by dashes. All actions are controllable. }%
    \label{fig:uninterrupted}
\end{figure}
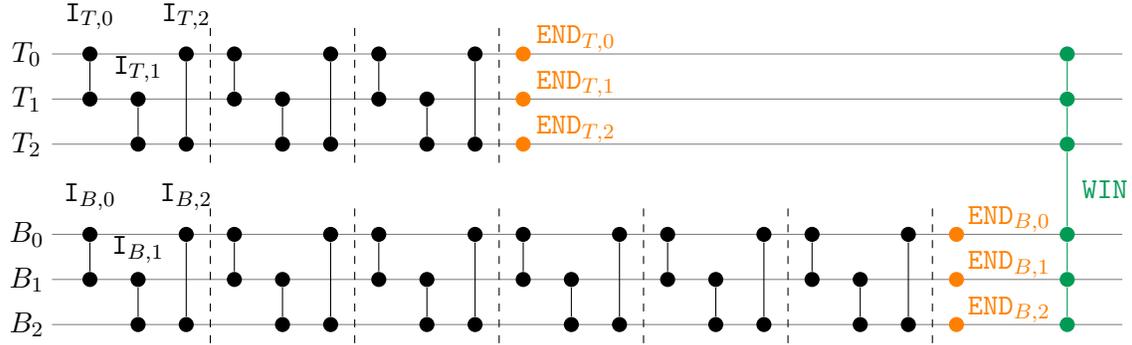

In each pool $X\in\{T,B\}$, increments have to be played in a fixed order:  \[\increment_{X,0},\increment_{X,1},\increment_{X,2},\increment_{X,0},,\increment_{X,1},\ldots\enspace.\]
Note that
no two increment actions of the same pool can commute,
thus an interrupted play $(\increment_{X,0}\increment_{X,1}\increment_{X,2})^*$
in a pool $X$
has a single linearization.
Each subword $\increment_{X,0},\increment_{X,1},\increment_{X,2}$ is called a \emph{round}.
This pattern is enforced by maintaining in the state space of every process $X_\ell$ a flag indicating whether or not the process is allowed to synchronize with
$\nex(X_\ell)$. The flag is initially set to $1$ for $X_0$ and to $0$ for $X_1$ and $X_2$, and is toggled every time the process performs an increment.

The only way for processes to win
is a transition on the global action \win, in which case they immediately and definitively enter
a final state, and no further transition can occur after that.
A transition on \win\ can be triggered once all six processes have performed a local transition on their \terminate\ action.

The transition on the local action $\terminate_{X,\ell}$ for process $X_\ell$ is available if and only if
the process has played at least one increment action and has finished the current round. For $X_0$ and $X_2$ (resp.\ for  $X_1$) it means that the last increment was $\increment_{X_2}$ (resp.\ was $\increment_{X_1}$). This constraint can easily be encoded with a flag in the state space of the process.

A play stays uninterrupted,
 like the one on Figure~\ref{fig:uninterrupted},
 when all actions are controllable and
 the  environment does not play at all,
  in which case the game is easy to win.
 But in general,
 a winning strategy
  should also react correctly
  to uncontrollable actions
 of the environment
called
 \emph{checks}.

\paragraph*{Interrupting plays by color checks.}

The environment has the ability to interrupt a play
by triggering an uncontrollable \emph{check action}.
This is represented on Figure~\ref{fig:check}.
Intuitively, when the environment interrupts the
play after $x$ rounds of the top pool and $y$ rounds of the bottom pool,
the players are asked to pick the color $f(x,y)\in C$ of the edge $(x,y)$ of the
solution to the coloring problem.
There are rules which enforce the players to
define that way a coloring $f: [n]\times [m] \to C$ which satisfies the constraints,
where $n$ and $m$ are the number of rounds of the top and bottom
pools, respectively, before playing $\terminate$. If the strategy of the players makes them cheat or describe a coloring $f$ which does not satisfy the constraints, the environment can trigger one or two checks which make the players lose.

\begin{center}
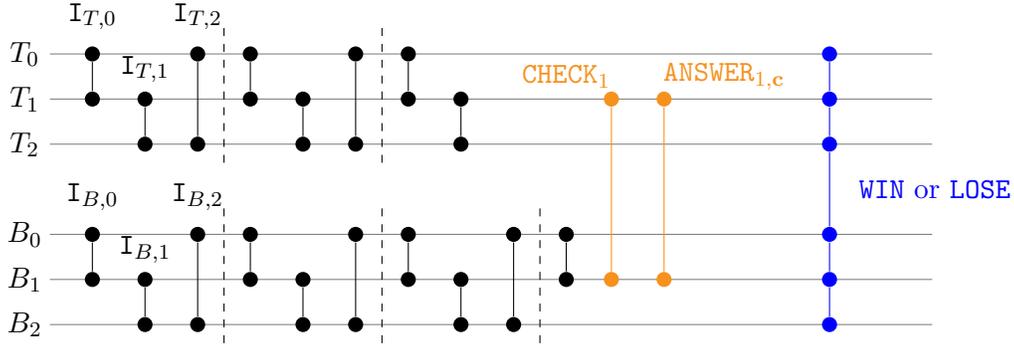
\begin{figure}[ht]
    \begin{tikzpicture}
\begin{scope}[scale=1]

\newcommand{\eeee}{mycolor}
\newcommand{\eeeeee}{mycolor}
\newcommand{\eeeee}{1.5cm}
\newcommand{\vfour}{black}
\newcommand{\vfourr}{mycolor}
\newcommand{\colw}{black}
\newcommand{\colwin}{ForestGreen}
\newcommand{\colend}{orange}
\newcommand{\colcheck}{BurntOrange}
\newcommand{\spec}{\vfour}
\newcommand{\ff}[1]{{}}

\newcommand{\vdist}{.6cm}
\newcommand{\hdist}{.7cm}

\tikzstyle{every node}=[node distance=\vdist]
\node(lab1) at (1,-1) {$T_0$ };
\node(lab2) [below of=lab1] {$T_1$ };
\node(lab3) [below of=lab2] {$T_2$ };
\node(lab0) [below of=lab3] { };
\node(lab4) [below of=lab0] {$B_0$ };
\node(lab5) [below of=lab4] {$B_1$ };
\node(lab6) [below of=lab5] {$B_2$ };

\tikzstyle{every node}=[node distance=.2cm]
\node(l1) [right of=lab1] {};
\node(l2) [right of=lab2] {};
\node(l3) [right of=lab3] {};
\node(l4) [right of=lab4] {};
\node(l5) [right of=lab5] {};
\node(l6) [right of=lab6] {};

\tikzstyle{every node}=[node distance=12cm]
\node(r1) [right of=l1] {};
\draw[gray] (l1) -- (r1);
\node(r2) [right of=l2] {};
\draw[gray] (l2) -- (r2);
\node(r3) [right of=l3] {};
\draw[gray] (l3) -- (r3);
\node(r4) [right of=l4] {};
\draw[gray] (l4) -- (r4);
\node(r5) [right of=l5] {};
\draw[gray] (l5) -- (r5);
\node(r6) [right of=l6] {};
\draw[gray] (l6) -- (r6);

\newcommand{\oneround}{
\tikzstyle{every state}=[fill=black,draw=none,inner sep=0pt,minimum size=0.2cm]
\tikzstyle{every node}=[node distance=\hdist]
\node[state](a0)[fill=\vfour] at (0,0) {};
\node(dummy)[right of=a0] {};
\node[state](c0)[right of=dummy] {};
\tikzstyle{every node}=[node distance=\vdist]
\node[state](a1)[below of=a0]{};
\draw[\colw] (a0) -- (a1);
\tikzstyle{every node}=[node distance=\hdist]
\node[state](b1)[right of=a1]{};
\tikzstyle{every node}=[node distance=\vdist]
\node[state](b2)[below of=b1]{};
\draw[\colw] (b1) -- (b2);
\tikzstyle{every node}=[node distance=\hdist]
\node[state](c2)[right of=b2]{};
\draw[\colw] (c0) -- (c2);
\tikzstyle{every node}=[node distance=0.5*\hdist]
\node(sep0)[right of=c0]{};
\node(sep1)[right of=c2]{};
\tikzstyle{every node}=[node distance=0.8*\vdist]
\node(sepa0)[above of=sep0]{};
\node(sepa1)[below of=sep1]{};
\draw[dashed, \colw] (sepa0) -- (sepa1);
}

\newcommand{\totoend}{
\tikzstyle{every state}=[fill=\colend,draw=none,inner sep=0pt,minimum size=0.2cm]
\tikzstyle{every node}=[node distance=\vdist]
\node[state](e0)at (0,0) {};
\node[state](e1)[below of=e0] {};
\node[state](e2)[below of=e1]{};
}

\newcommand{\totowin}{
\tikzstyle{every state}=[fill=blue,draw=none,inner sep=0pt,minimum size=0.2cm]
\tikzstyle{every node}=[node distance=\vdist]
\node[state](e0)at (0,0) {};
\node[state](e1)[below of=e0] {};
\node[state](e2)[below of=e1]{};
\node(dummy)[below of=e2]{};
\node[state](eb0)[below of=dummy] {};
\node[state](eb1)[below of=eb0] {};
\node[state](eb2)[below of=eb1]{};
\draw[blue] (e0) -- (e1) -- (e2) -- (eb0) -- (eb1) -- (eb2) ;
}

\begin{scope}[shift={($(l4)+(\hdist,0)$)}]
\oneround
\node(lab1) at (0,0.5) {$\increment_{B,0}$ };
\node(lab2) at (\hdist,-0.2) {$\increment_{B,1}$ };
\node(lab3) at (2 * \hdist,0.5 ) {$\increment_{B,2}$ };
\begin{scope}[shift={($(3*\hdist,0)$)}]
\oneround
\begin{scope}[shift={($(3*\hdist,0)$)}]
\oneround
\begin{scope}[shift={($(3*\hdist,0)$)}]
\tikzstyle{every state}=[fill=black,draw=none,inner sep=0pt,minimum size=0.2cm]
\tikzstyle{every node}=[node distance=\hdist]
\node[state](a0)[fill=\vfour] at (0,0) {};
\tikzstyle{every node}=[node distance=\vdist]
\node[state](a1)[below of=a0]{};
\draw[\colw] (a0) -- (a1);
\tikzstyle{every node}=[node distance=\hdist]

\tikzstyle{every state}=[fill=\colcheck,draw=none,inner sep=0pt,minimum size=0.2cm]
\tikzstyle{every node}=[node distance=\vdist]
\node[state](cb)[right of=a1]{};
\node(d1)[above of=cb]{};
\node(d2)[above of=d1]{};
\node(d3)[above of=d2]{};
\node[state](ct)[above of=d3]{};
\draw[\colcheck] (cb) -- (ct);
\node(cc1) at (0.0,3.5*\vdist) {$\textcolor{\colcheck}{ \chk_{1}}$ };

\tikzstyle{every node}=[node distance=\hdist]
\node[state](ab)[right of=cb]{};
\node[state](at)[right of=ct]{};
\draw[\colcheck] (ab) -- (at);
\node(cc2) at (0.7cm + 2*\hdist,3.5*\vdist) {$\textcolor{\colcheck}{ \answer_{1,{\bf c}}}$ };
\begin{scope}[shift={($(5*\hdist,4*\vdist)$)}]
\totowin
\node(lab1) at (2*\hdist,-3*\vdist) {$\textcolor{blue}{\win\text{ or }\lose}$ };
\end{scope}

\end{scope}
\end{scope}
\end{scope}
\end{scope}

\begin{scope}[shift={($(l1)+(\hdist,0)$)}]
\oneround
\node(lab1) at (0,0.5) {$\increment_{T,0}$ };
\node(lab2) at (\hdist,-0.2) {$\increment_{T,1}$ };
\node(lab3) at (2 * \hdist,0.5 ) {$\increment_{T,2}$ };

\begin{scope}[shift={($(3*\hdist,0)$)}]
\oneround
\begin{scope}[shift={($(3*\hdist,0)$)}]
\tikzstyle{every state}=[fill=black,draw=none,inner sep=0pt,minimum size=0.2cm]
\tikzstyle{every node}=[node distance=\hdist]
\node[state](a0)[fill=\vfour] at (0,0) {};
\tikzstyle{every node}=[node distance=\vdist]
\node[state](a1)[below of=a0]{};
\draw[\colw] (a0) -- (a1);
\tikzstyle{every node}=[node distance=\hdist]
\node[state](b1)[right of=a1]{};
\tikzstyle{every node}=[node distance=\vdist]
\node[state](b2)[below of=b1]{};
\draw[\colw] (b1) -- (b2);

\end{scope}
\end{scope}
\end{scope}

\end{scope}
\end{tikzpicture}
    \caption{An uncontrollable check of processes $T_1$ and $B_1$ after $3$ rounds of the top pool and $4$ rounds of the bottom pool. Intuitively, the environment is asking to processes $T_1$ and $B_1$ ``what is the value of $f(3,4)$, where $f:[n]\times [m]\to C$ is your solution to the constrained coloring problem?''. The check is followed by the controllable action $\answer_{1,c}$ of the same two processes, parametrized by a color $c\in C$. This way the processes claim that $f(3,4)=c$. The answer is followed by either the \win\ or the \lose\ action, which terminates the game.}%
    \label{fig:check}
\end{figure}
\end{center}

For every index $\ell\in\{0,1,2\}$,
there is an uncontrollable action $\chk_\ell$ which synchronizes the two processes
$T_\ell$ and $B_\ell$.
A transition on  action $\chk_\ell$ is possible if and only if
the last action
of both processes
is either an increment or the action \terminate.
In particular, both processes should have played at least one increment.
After the check,
the only possible transitions rely on controllable actions $\answer_{\ell,c}$ synchronizing $T_\ell$ and $B_\ell$
and indexed by colors $c\in C$.
The answer is stored in the state space of the processes.

After that, there are only two possible outcomes: either the $\win$ or the $\lose$ action synchronizes all processes, and then no further transition is available.
There is no extra condition needed to execute a global transition on $\win$:
as soon as at least one pair of processes has performed
an answer the transition on \win\ is available.

\paragraph*{Six ways to lose.}

Considering the non-deterministic environment as adversarial,
it prefers transitions on $\lose$ rather than those on $\win$.
Losing can occur in six different ways,
five of them correspond to the five constraints in the definition of the \lcp.

\newcommand{\ri}{R}
A transition on $\lose$ is possible in case one of the following conditions holds:
\begin{enumerate}
\item[(a)]
if both processes $T_\ell$ and $B_\ell$ have played a single increment
and their answer is a color $c$
which is not initial (i.e. $c\not \in C_i$).
The condition ``has played a single increment''
can be stored in the state space of the processes,
and used to allow or not this transition; or
\item[(b)]
if both processes $T_\ell$ and $B_\ell$
have played $\terminate$ and their answer is a color $c$
which is not final (i.e. $c\not \in C_f$).
\end{enumerate}

\noindent
There are four other ways to lose,
which require that two checks occur in parallel,
for two different pairs of processes indexed by $\ell$ and $\ell'$, respectively.
This leads to two parallel answers corresponding to colors $c$ and $c'$, respectively.
The conditions for losing rely on the definition of the \emph{round index} of a play $u$ for a process $X_\ell$,
defined as $\lfloor(h-1)/2\rfloor$ where $h>0$ is the number of increments played by $X_\ell$ in $u$ when the check occurs.
The round index is denoted  $\ri_{X_\ell}(u)$.

\begin{figure}[ht]
    \begin{tikzpicture}
\begin{scope}[scale=1]

\newcommand{\eeee}{mycolor}
\newcommand{\eeeeee}{mycolor}
\newcommand{\eeeee}{1.5cm}
\newcommand{\vfour}{black}
\newcommand{\vfourr}{mycolor}
\newcommand{\colw}{black}
\newcommand{\colwin}{ForestGreen}
\newcommand{\colend}{orange}
\newcommand{\colcheck}{BurntOrange}
\newcommand{\collose}{red}
\newcommand{\spec}{\vfour}
\newcommand{\ff}[1]{{}}

\newcommand{\vdist}{.6cm}
\newcommand{\hdist}{.7cm}

\tikzstyle{every node}=[node distance=\vdist]
\node(lab1) at (1,-1) {$T_0$ };
\node(lab2) [below of=lab1] {$T_1$ };
\node(lab3) [below of=lab2] {$T_2$ };
\node(lab0) [below of=lab3] { };
\node(lab4) [below of=lab0] {$B_0$ };
\node(lab5) [below of=lab4] {$B_1$ };
\node(lab6) [below of=lab5] {$B_2$ };

\tikzstyle{every node}=[node distance=.2cm]
\node(l1) [right of=lab1] {};
\node(l2) [right of=lab2] {};
\node(l3) [right of=lab3] {};
\node(l4) [right of=lab4] {};
\node(l5) [right of=lab5] {};
\node(l6) [right of=lab6] {};

\tikzstyle{every node}=[node distance=12cm]
\node(r1) [right of=l1] {};
\draw[gray] (l1) -- (r1);
\node(r2) [right of=l2] {};
\draw[gray] (l2) -- (r2);
\node(r3) [right of=l3] {};
\draw[gray] (l3) -- (r3);
\node(r4) [right of=l4] {};
\draw[gray] (l4) -- (r4);
\node(r5) [right of=l5] {};
\draw[gray] (l5) -- (r5);
\node(r6) [right of=l6] {};
\draw[gray] (l6) -- (r6);

\newcommand{\oneround}{
\tikzstyle{every state}=[fill=black,draw=none,inner sep=0pt,minimum size=0.2cm]
\tikzstyle{every node}=[node distance=\hdist]
\node[state](a0)[fill=\vfour] at (0,0) {};
\node(dummy)[right of=a0] {};
\node[state](c0)[right of=dummy] {};
\tikzstyle{every node}=[node distance=\vdist]
\node[state](a1)[below of=a0]{};
\draw[\colw] (a0) -- (a1);
\tikzstyle{every node}=[node distance=\hdist]
\node[state](b1)[right of=a1]{};
\tikzstyle{every node}=[node distance=\vdist]
\node[state](b2)[below of=b1]{};
\draw[\colw] (b1) -- (b2);
\tikzstyle{every node}=[node distance=\hdist]
\node[state](c2)[right of=b2]{};
\draw[\colw] (c0) -- (c2);
\tikzstyle{every node}=[node distance=0.5*\hdist]
\node(sep0)[right of=c0]{};
\node(sep1)[right of=c2]{};
\tikzstyle{every node}=[node distance=0.8*\vdist]
\node(sepa0)[above of=sep0]{};
\node(sepa1)[below of=sep1]{};
\draw[dashed, \colw] (sepa0) -- (sepa1);
}

\newcommand{\totoend}{
\tikzstyle{every state}=[fill=\colend,draw=none,inner sep=0pt,minimum size=0.2cm]
\tikzstyle{every node}=[node distance=\vdist]
\node[state](e0)at (0,0) {};
\node[state](e1)[below of=e0] {};
\node[state](e2)[below of=e1]{};
}

\newcommand{\totosync}[2]{
\tikzstyle{every state}=[fill=#1,draw=none,inner sep=0pt,minimum size=0.2cm]
\tikzstyle{every node}=[node distance=\vdist]
\node[state](e0)at (0,0) {};
\node[state](e1)[below of=e0] {};
\node[state](e2)[below of=e1]{};
\node(dummy)[below of=e2]{};
\node[state](eb0)[below of=dummy] {};
\node[state](eb1)[below of=eb0] {};
\node[state](eb2)[below of=eb1]{};
\draw[#1] (e0) -- (e1) -- (e2) -- (eb0) -- (eb1) -- (eb2) ;
\node(lab1) at (0.5,-3*\vdist) {$\textcolor{#1}{#2}$ };

}

\newcommand{\totowin}{\totosync{\colwin}{\win}}
\newcommand{\totolose}{\totosync{\collose}{\lose}}

\newcommand{\drawchk}[2] {
\tikzstyle{every state}=[fill=\colcheck,draw=none,inner sep=0pt,minimum size=0.2cm]
\tikzstyle{every node}=[node distance=\vdist]
\node[state](cb)at (0,0){};
\node(d1)[above of=cb]{};
\node(d2)[above of=d1]{};
\node(d3)[above of=d2]{};
\node[state](ct)[above of=d3]{};
\draw[\colcheck] (cb) -- (ct);
\node(cc1) at (-\hdist,4.5*\vdist) {$\textcolor{\colcheck}{ \chk_{#1}}$ };

\tikzstyle{every node}=[node distance=\hdist]
\node[state](ab)[right of=cb]{};
\node[state](at)[right of=ct]{};
\draw[\colcheck] (ab) -- (at);
\node(cc2) at (2.5*\hdist,4.5*\vdist) {$\textcolor{\colcheck}{ \answer_{#1,{\bf #2}}}$ };
}

\begin{scope}[shift={($(l4)+(\hdist,0)$)}]
\oneround
\node(lab1) at (0,0.5) {$\increment_{B,0}$ };
\node(lab2) at (\hdist,-0.2) {$\increment_{B,1}$ };
\node(lab3) at (2 * \hdist,0.5 ) {$\increment_{B,2}$ };
\begin{scope}[shift={($(3*\hdist,0)$)}]
\oneround
\begin{scope}[shift={($(3*\hdist,0)$)}]
\oneround
\begin{scope}[shift={($(3*\hdist,0)$)}]
\tikzstyle{every state}=[fill=black,draw=none,inner sep=0pt,minimum size=0.2cm]
\tikzstyle{every node}=[node distance=\hdist]
\node[state](a0)[fill=\vfour] at (0,0) {};
\tikzstyle{every node}=[node distance=\vdist]
\node[state](a1)[below of=a0]{};
\draw[\colw] (a0) -- (a1);
\tikzstyle{every node}=[node distance=\hdist]

\begin{scope}[shift={($(a1)+(\hdist,0)$)}]
\drawchk{1}{c}
\end{scope}

\begin{scope}[shift={($(a0)+(1.3*\hdist,0)$)}]
\drawchk{0}{d}
\end{scope}

\begin{scope}[shift={($(5.5*\hdist,4*\vdist)$)}]
\totolose
\end{scope}

\end{scope}
\end{scope}
\end{scope}
\end{scope}

\begin{scope}[shift={($(l1)+(\hdist,0)$)}]
\oneround
\node(lab1) at (0,0.5) {$\increment_{T,0}$ };
\node(lab2) at (\hdist,-0.2) {$\increment_{T,1}$ };
\node(lab3) at (2 * \hdist,0.5 ) {$\increment_{T,2}$ };

\begin{scope}[shift={($(3*\hdist,0)$)}]
\oneround
\begin{scope}[shift={($(3*\hdist,0)$)}]
\tikzstyle{every state}=[fill=black,draw=none,inner sep=0pt,minimum size=0.2cm]
\tikzstyle{every node}=[node distance=\hdist]
\node[state](a0)[fill=\vfour] at (0,0) {};
\tikzstyle{every node}=[node distance=\vdist]
\node[state](a1)[below of=a0]{};
\draw[\colw] (a0) -- (a1);
\tikzstyle{every node}=[node distance=\hdist]
\node[state](b1)[right of=a1]{};
\tikzstyle{every node}=[node distance=\vdist]
\node[state](b2)[below of=b1]{};
\draw[\colw] (b1) -- (b2);

\end{scope}
\end{scope}
\end{scope}

\end{scope}
\end{tikzpicture}
    \caption{One of six ways to lose: two checks occur in parallel.
    In both pools, the two checks occur during the same round.
    The two pairs of processes give (in parallel as well) two different answers $c\neq d$.
    This allows the environment to trigger the \lose\ action. Intuitively, the players have cheated since they gave two different answers to the question ``what is the value of $f(3,4)$''?}%
    \label{fig:checkcheat}
\end{figure}

The conditions for \lose\ are based on a comparison of $c$ and $c'$ as well as the respective rounds of the processes.
For each pool $X\in\{T,B\}$ and every play $u$ there are two possibilities.
W.l.o.g., assume we have chosen $\ell$ and $\ell'$ so that $\ell' < \ell$.
Either both processes $X_\ell$ and $X_{\ell'}$ are in the same round
(meaning they have the same round index in $u$)
or process $X_\ell'$ is one round ahead of $X_{\ell}$
(meaning that $\ri_{X_\ell'}(u) = 1 + \ri_{X_\ell}(u)$).
No other case may occur because
every uninterrupted play in pool $X$ is a prefix of a word
in $(\increment_{X,0}\increment_{X,1}\increment_{X,2})^*\terminate$
and $\ell' < \ell$.

A transition on $\lose$ is possible in case one of the following conditions holds:
\begin{enumerate}
\item[(c)]
in both pools both processes are in the same round but the answers are different (i.e. $c\neq c'$); or
\item[(d)]
in both pools process $X_\ell'$ is one round ahead of $X_\ell$ and the pair of answers $(c,c')$ is a forbidden square (i.e. $(c,c')\in S)$; or
\item[(e)]
in the top pool process $T_\ell'$ is one round ahead of $T_\ell$ while in the bottom pool
both processes are in the same round and the pair of answers $(c,c')$ is a forbidden upper-triangle (i.e. $(c,c')\in UT)$; or
\item[(f)]
in the top pool both processes are in the same round while in the bottom pool
process $B_\ell'$ is one round ahead of $B_\ell$
and the pair of answers $(c,c')$ is a forbidden lower-triangle (i.e. $(c,c')\in LT)$.
\end{enumerate}

\noindent
Remark that the condition ``being in the same round''
can be easily implemented in the transition table.
For that,
it is enough that each process keeps track
in its state space
of the number of increments it has already played, modulo $4$.
From that counter, one can derive the parity of the round index
and compare the respective parities for the two processes $X_\ell$ and $X_{\ell'}$.

Condition (c) is illustrated on Figure~\ref{fig:checkcheat}
while condition (d) is illustrated on Figure~\ref{fig:checksquare}.

\begin{center}
\begin{figure}[ht]
    \begin{tikzpicture}
\begin{scope}[scale=1]

\newcommand{\eeee}{mycolor}
\newcommand{\eeeeee}{mycolor}
\newcommand{\eeeee}{1.5cm}
\newcommand{\vfour}{black}
\newcommand{\vfourr}{mycolor}
\newcommand{\colw}{black}
\newcommand{\colwin}{ForestGreen}
\newcommand{\colend}{orange}
\newcommand{\colcheck}{BurntOrange}
\newcommand{\collose}{red}
\newcommand{\spec}{\vfour}
\newcommand{\ff}[1]{{}}

\newcommand{\vdist}{.6cm}
\newcommand{\hdist}{.7cm}

\tikzstyle{every node}=[node distance=\vdist]
\node(lab1) at (1,-1) {$T_0$ };
\node(lab2) [below of=lab1] {$T_1$ };
\node(lab3) [below of=lab2] {$T_2$ };
\node(lab0) [below of=lab3] { };
\node(lab4) [below of=lab0] {$B_0$ };
\node(lab5) [below of=lab4] {$B_1$ };
\node(lab6) [below of=lab5] {$B_2$ };

\tikzstyle{every node}=[node distance=.2cm]
\node(l1) [right of=lab1] {};
\node(l2) [right of=lab2] {};
\node(l3) [right of=lab3] {};
\node(l4) [right of=lab4] {};
\node(l5) [right of=lab5] {};
\node(l6) [right of=lab6] {};

\tikzstyle{every node}=[node distance=14cm]
\node(r1) [right of=l1] {};
\draw[gray] (l1) -- (r1);
\node(r2) [right of=l2] {};
\draw[gray] (l2) -- (r2);
\node(r3) [right of=l3] {};
\draw[gray] (l3) -- (r3);
\node(r4) [right of=l4] {};
\draw[gray] (l4) -- (r4);
\node(r5) [right of=l5] {};
\draw[gray] (l5) -- (r5);
\node(r6) [right of=l6] {};
\draw[gray] (l6) -- (r6);

\newcommand{\oneround}{
\tikzstyle{every state}=[fill=black,draw=none,inner sep=0pt,minimum size=0.2cm]
\tikzstyle{every node}=[node distance=\hdist]
\node[state](a0)[fill=\vfour] at (0,0) {};
\node(dummy)[right of=a0] {};
\node[state](c0)[right of=dummy] {};
\tikzstyle{every node}=[node distance=\vdist]
\node[state](a1)[below of=a0]{};
\draw[\colw] (a0) -- (a1);
\tikzstyle{every node}=[node distance=\hdist]
\node[state](b1)[right of=a1]{};
\tikzstyle{every node}=[node distance=\vdist]
\node[state](b2)[below of=b1]{};
\draw[\colw] (b1) -- (b2);
\tikzstyle{every node}=[node distance=\hdist]
\node[state](c2)[right of=b2]{};
\draw[\colw] (c0) -- (c2);
}

\newcommand{\totosep}{
\tikzstyle{every node}=[node distance=0.5*\hdist]
\node(sep0)[right of=c0]{};
\node(sep1)[right of=c2]{};
\tikzstyle{every node}=[node distance=0.8*\vdist]
\node(sepa0)[above of=sep0]{};
\node(sepa1)[below of=sep1]{};
\draw[dashed, \colw] (sepa0) -- (sepa1);
}

\newcommand{\totoend}{
\tikzstyle{every state}=[fill=\colend,draw=none,inner sep=0pt,minimum size=0.2cm]
\tikzstyle{every node}=[node distance=\vdist]
\node[state](e0)at (0,0) {};
\node[state](e1)[below of=e0] {};
\node[state](e2)[below of=e1]{};
}

\newcommand{\totosync}[2]{
\tikzstyle{every state}=[fill=#1,draw=none,inner sep=0pt,minimum size=0.2cm]
\tikzstyle{every node}=[node distance=\vdist]
\node[state](e0)at (0,0) {};
\node[state](e1)[below of=e0] {};
\node[state](e2)[below of=e1]{};
\node(dummy)[below of=e2]{};
\node[state](eb0)[below of=dummy] {};
\node[state](eb1)[below of=eb0] {};
\node[state](eb2)[below of=eb1]{};
\draw[#1] (e0) -- (e1) -- (e2) -- (eb0) -- (eb1) -- (eb2) ;
\node(lab1) at (0.5,-3*\vdist) {$\textcolor{#1}{#2}$ };

}

\newcommand{\totowin}{\totosync{\colwin}{\win}}
\newcommand{\totolose}{\totosync{\collose}{\lose}}

\newcommand{\drawchk}[2] {
\tikzstyle{every state}=[fill=\colcheck,draw=none,inner sep=0pt,minimum size=0.2cm]
\tikzstyle{every node}=[node distance=\vdist]
\node[state](cb)at (0,0){};
\node(d1)[above of=cb]{};
\node(d2)[above of=d1]{};
\node(d3)[above of=d2]{};
\node[state](ct)[above of=d3]{};
\draw[\colcheck] (cb) -- (ct);
\node(cc1) at (-\hdist,4.5*\vdist) {$\textcolor{\colcheck}{ \chk_{#1}}$ };

\tikzstyle{every node}=[node distance=\hdist]
\node[state](ab)[right of=cb]{};
\node[state](at)[right of=ct]{};
\draw[\colcheck] (ab) -- (at);
\node(cc2) at (1.5*\hdist,4.5*\vdist) {$\textcolor{\colcheck}{ \answer_{#1,{\bf #2}}}$ };
}

\begin{scope}[shift={($(l4)+(\hdist,0)$)}]
\oneround
\totosep
\node(lab1) at (0,0.5) {$\increment_{B,0}$ };
\node(lab2) at (\hdist,-0.2) {$\increment_{B,1}$ };
\node(lab3) at (2 * \hdist,0.5 ) {$\increment_{B,2}$ };
\begin{scope}[shift={($(3*\hdist,0)$)}]
\oneround
\totosep
\begin{scope}[shift={($(3*\hdist,0)$)}]
\oneround

\begin{scope}[shift={($(a0)+(3*\hdist,-2*\vdist)$)}]
\drawchk{2}{c}
\tikzstyle{every node}=[node distance=0.5*\hdist]
\node(sep1)[right of=ab]{};
\tikzstyle{every node}=[node distance=2*\vdist]
\node(sep0)[above of=sep1]{};
\tikzstyle{every node}=[node distance=0.8*\vdist]
\node(sepa0)[below of=sep1]{};
\node(sepa1)[above of=sep0]{};
\draw[dashed, \colw] (sepa0) -- (sepa1);
\end{scope}

\begin{scope}[shift={($(5*\hdist,0)$)}]
\tikzstyle{every state}=[fill=black,draw=none,inner sep=0pt,minimum size=0.2cm]
\tikzstyle{every node}=[node distance=\hdist]
\node[state](a0)[fill=\vfour] at (0.5*\hdist ,0) {};
\tikzstyle{every node}=[node distance=\vdist]
\node[state](a1)[below of=a0]{};
\draw[\colw] (a0) -- (a1);

\begin{scope}[shift={($(5.5*\hdist,4*\vdist)$)}]
\totolose
\end{scope}

\end{scope}
\end{scope}
\end{scope}
\end{scope}

\begin{scope}[shift={($(l1)+(\hdist,0)$)}]
\oneround
\totosep
\node(lab1) at (0,0.5) {$\increment_{T,0}$ };
\node(lab2) at (\hdist,-0.2) {$\increment_{T,1}$ };
\node(lab3) at (2 * \hdist,0.5 ) {$\increment_{T,2}$ };

\begin{scope}[shift={($(3*\hdist,0)$)}]
\oneround


\begin{scope}[shift={($(8.5*\hdist,0)$)}]

\tikzstyle{every state}=[fill=black,draw=none,inner sep=0pt,minimum size=0.2cm]
\tikzstyle{every node}=[node distance=\hdist]
\node[state](a0)[fill=\vfour] at (0,0) {};
\tikzstyle{every node}=[node distance=\vdist]
\node[state](a1)[below of=a0]{};
\draw[\colw] (a0) -- (a1);
\tikzstyle{every node}=[node distance=\hdist]

\tikzstyle{every node}=[node distance=\hdist]
\node(sep1)[left of=a0]{};
\tikzstyle{every node}=[node distance=2*\vdist]
\node(sep0)[below of=sep1]{};
\tikzstyle{every node}=[node distance=0.8*\vdist]
\node(sepa0)[above of=sep1]{};
\node(sepa1)[below of=sep0]{};
\draw[dashed, \colw] (sepa0) -- (sepa1);

\begin{scope}[shift={($(a1)+(1.3*\hdist,-3*\vdist)$)}]
\drawchk{0}{d}
\end{scope}

\end{scope}
\end{scope}
\end{scope}

\end{scope}
\end{tikzpicture}
    \caption{Another way to lose: two checks occur in parallel. $\chk_2$ occurs in round $2$ of pool $T$ and round $3$ of pool $B$ thus $T_2,B_2$ are assumed to answer with $d=f(2,3)$. In parallel, $\chk_0$ occurs in round $3$ of pool $T$ and round $4$ of pool $B$. Hence $T_0,B_0$ are assumed to answer with $d=f(3,4)$. We assume here that the pairs of answers $(c,d)$ is a forbidden square, which  allows the environment to trigger the \lose\ action.}%
\label{fig:checksquare}
\end{figure}
\end{center}

\subsection{Proof of Lemma~\ref{lem:equiv}}

Lemma~\ref{lem:equiv} is an equivalence.
We start with the converse implication.
Assume  that there is a finite bipartite coloring $f:[n]\times [m]\to C$ satisfying the constraints $(C_i,C_f,S,UT,LT)$.
A winning strategy for processes of the top (resp.\ bottom) pool, as long as they are not interrupted by a check, consists in playing $n$ rounds (resp. $m$ rounds) of increments, followed by the \terminate\ action.
If the environment triggers a check on $T_\ell$
and $B_\ell$ after some play $u$, the processes answer $f(x,y)$, where $x$ (resp. $y$) is the
round index of $T_\ell$ (resp.\ of $B_\ell$) in $u$. This information is available to both processes since they share the same causal past right after the check.
Since $f$ satisfies the constraint, the environment cannot trigger a transition on \lose.
Condition (a) cannot occur because all the answers when both processes are in round $0$ are equal to $f(0,0)\in C_i$.
Condition (b) cannot occur because all the answers when both processes are in round $(n,m)$ are equal to $f(n,m)\in C_f$.
Condition (c) cannot occur because the answers only depends on the rounds, independently of the identities of the processes
and the number of increments they have performed in the current round.
Condition (d), (e) and (f) cannot occur because no pattern induced by $f$ is forbidden.

\medskip

We now prove the direct implication of Lemma~\ref{lem:equiv}.
Assume that processes have a winning strategy $\sigma$.
The first step is to define a finite bipartite $C$-coloring $f$.
Let $n$ (resp. $m$) be round index of $T_1$ (resp.\ of $B_1$) in the maximal
uninterrupted play consistent with $\sigma$ (which exists since $\sigma$ is winning).
For every $(x,y)\in [n]\times [m]$ denote
\begin{align*}
&u_{T,x}=(\increment_{T,0}\increment_{T,1}\increment_{T,2})^{x}\increment_{T,0}\increment_{T,1}\\
&u_{B,y}=(\increment_{B,0}\increment_{B,1}\increment_{B,2})^{y}\increment_{B,0}\increment_{B,1}\\
&u_{x,y} \text{ the parallel product of } u_{T,x} \text{ and } u_{B,y}
\end{align*}
and let $f(x,y)$ be the answer of
$T_1,B_1$
after a check on the play $u_{x,y}$ i.e.
\[
f(x,y) = c \text { such that } (u_{x,y}\chk_1\answer_{1,c}) \text{ is a $\sigma$-play}  \enspace.
\]
This is well defined: since $\sigma$ is winning it creates no deadlock in a non-final state, thus
at least one answer of $T_1,B_1$ is allowed by $\sigma$  after $u_{x,y}\chk_1$. In the sequel, we assume that exactly one answer is allowed by $\sigma$. This is w.l.o.g.:\ if there is a winning strategy which allows several answers, we can restrict it to allow a single answer after every check, and it will still be winning.

We show that $f$ satisfies the five constraints $(C_i,C_f,S,UT,LT)$.

\paragraph{Constraint $C_i$ on initial colors.}
Since $\sigma$ is winning, no transition on action \lose\ is available
after $u_{0,0}\chk_1\answer_{1,f(0,0)}$. Thus according to
condition (a) above,
$f(0,0)$ is an initial color.

\paragraph{Constraint $C_f$ on final colors.}
We consider the checks
\begin{align*}
&p_1 = u_{n,m}\chk_1\\
&p_2 =u_{n,m}\increment_{T,2}\increment_{B,2}\chk_2\\
&p_3 =u_{n,m}\terminate_{T,1}\terminate_{B,1}\chk_1\enspace.
\end{align*}

By definition of $f$, the answer to $p_1$ is $f(n,m)$.
Denote $c_2$ and $c_3$
the answers to $p_2$ and $p_3$, respectively.
Remark that $p_1$ and $p_2$ can occur in parallel,
because processes $T_1,B_1$ do not play in
$\increment_{T,2}\increment_{B,2}\chk_2$.
For similar reasons, also $p_2$ and $p_3$ can occur in parallel.
Moreover, in all these three plays, processes $T_1$ and $T_2$ are in the same round,
and processes $B_1$ and $B_2$ are also in the same round
(remember that \terminate\ actions are not accounted for when computing the round index).
Since $\sigma$ is winning,
condition (c) will neither be satisfied by
the pair of parallel checks $p_1,p_2$ nor by the pair $p_2,p_3$,
thus $f(n,m)=c_2=c_3$.
Finally, after the check $p_3$, both processes $T_1,B_1$
are in the state \sss\ and they answer $f(n,m)$.
According to condition (b) above,
$f(n,m)$ is a final color.

\paragraph{Constraint $S$ on forbidden squares.}
Let $x,y\in [n-1]\times[m-1]$.
We consider four possible checks
\begin{align*}
& p_1 = u_{x,y}\chk_1\\
&p_2 = u_{x,y}\increment_{T,2}\increment_{B,2}\chk_2\\
&p_0' = u_{x,y}\increment_{T,2}\increment_{B,2}\increment_{T,0}\increment_{B,0}\chk_0\\
& p_1' = u_{x+1,y+1}\chk_1 \enspace.
\end{align*}
By definition of $f$, the answers to $p_1$ and $p'_1$ are
$f(x,y)$ and $f(x+1,y+1)$, respectively. Denote $c$ the answer to $p_2$
and $d$ the answer to $p_0'$.
The four checks are  illustrated on Figure~\ref{fig:squaretrans}.

\begin{figure}[ht]
    \begin{tikzpicture}
\begin{scope}[scale=1]

\newcommand{\eeee}{mycolor}
\newcommand{\eeeeee}{mycolor}
\newcommand{\eeeee}{1.5cm}
\newcommand{\vfour}{black}
\newcommand{\vfourr}{mycolor}
\newcommand{\colw}{black}
\newcommand{\colwin}{ForestGreen}
\newcommand{\colend}{orange}
\newcommand{\colcheck}{BurntOrange}
\newcommand{\collose}{red}
\newcommand{\spec}{\vfour}
\newcommand{\ff}[1]{{}}

\newcommand{\vdist}{.6cm}
\newcommand{\hdist}{.7cm}

\tikzstyle{every node}=[node distance=\vdist]
\node(lab1) at (1,-1) {$T_0$ };
\node(lab2) [below of=lab1] {$T_1$ };
\node(lab3) [below of=lab2] {$T_2$ };
\node(lab0) [below of=lab3] { };
\node(lab4) [below of=lab0] {$B_0$ };
\node(lab5) [below of=lab4] {$B_1$ };
\node(lab6) [below of=lab5] {$B_2$ };

\tikzstyle{every node}=[node distance=.2cm]
\node(l1) [right of=lab1] {};
\node(l2) [right of=lab2] {};
\node(l3) [right of=lab3] {};
\node(l4) [right of=lab4] {};
\node(l5) [right of=lab5] {};
\node(l6) [right of=lab6] {};

\tikzstyle{every node}=[node distance=14cm]
\node(r1) [right of=l1] {};
\draw[gray] (l1) -- (r1);
\node(r2) [right of=l2] {};
\draw[gray] (l2) -- (r2);
\node(r3) [right of=l3] {};
\draw[gray] (l3) -- (r3);
\node(r4) [right of=l4] {};
\draw[gray] (l4) -- (r4);
\node(r5) [right of=l5] {};
\draw[gray] (l5) -- (r5);
\node(r6) [right of=l6] {};
\draw[gray] (l6) -- (r6);

\newcommand{\oneround}{
\tikzstyle{every state}=[fill=black,draw=none,inner sep=0pt,minimum size=0.2cm]
\node[state](b1)at (0,-\vdist){};
\tikzstyle{every node}=[node distance=\vdist]
\node[state](b2)[below of=b1]{};
\draw[\colw] (b1) -- (b2);
\tikzstyle{every node}=[node distance=2.5*\hdist]
\node[state](c2)[right of=b2]{};
\tikzstyle{every node}=[node distance=2*\vdist]
\node[state](c0)[above of=c2] {};
\draw[\colw] (c0) -- (c2);

\tikzstyle{every node}=[node distance=2.5*\hdist]
\node(sep0)[right of=c0]{};
\node(sep1)[right of=c2]{};
\tikzstyle{every node}=[node distance=0.8*\vdist]
\node(sepa0)[above of=sep0]{};
\node(sepa1)[below of=sep1]{};
\draw[dashed, \colw] (sepa0) -- (sepa1);

\begin{scope}[shift={($(6*\hdist,0)$)}]
\tikzstyle{every state}=[fill=black,draw=none,inner sep=0pt,minimum size=0.2cm]
\tikzstyle{every node}=[node distance=2*\hdist]
\node[state](a0)[fill=\vfour] at (0,0) {};
\tikzstyle{every node}=[node distance=\vdist]
\node[state](a1)[below of=a0]{};
\draw[\colw] (a0) -- (a1);
\tikzstyle{every node}=[node distance=3*\hdist]
\node[state](b1)[right of=a1]{};
\tikzstyle{every node}=[node distance=\vdist]
\node[state](b2)[below of=b1]{};
\draw[\colw] (b1) -- (b2);
\end{scope}
}

\newcommand{\totosep}{
\tikzstyle{every node}=[node distance=0.5*\hdist]
\node(sep0)[right of=c0]{};
\node(sep1)[right of=c2]{};
\tikzstyle{every node}=[node distance=0.8*\vdist]
\node(sepa0)[above of=sep0]{};
\node(sepa1)[below of=sep1]{};
\draw[dashed, \colw] (sepa0) -- (sepa1);
}

\newcommand{\drawchk}[2] {

\tikzstyle{every state}=[fill=\colcheck,draw=none,inner sep=0pt,minimum size=0.2cm]
\tikzstyle{every node}=[node distance=\vdist]
\node[state](cb)at (0,0){};
\node(d1)[above of=cb]{};
\node(d2)[above of=d1]{};
\node(d3)[above of=d2]{};
\node[state](ct)[above of=d3]{};
\draw[\colcheck] (cb) -- (ct);

\tikzstyle{every node}=[node distance=0.5*\hdist]
\node[state](ab)[right of=cb]{};
\node[state](at)[right of=ct]{};
\draw[\colcheck] (ab) -- (at);
\node(cc2) at (0.3*\vdist,4.5*\vdist) {$\textcolor{\colcheck}{ #2}$ };
}

\begin{scope}[shift={($(l4)+(5*\hdist,0)$)}]
\oneround
\begin{scope}[shift={($(0.5*\hdist,-\vdist)$)}]
\drawchk{1}{f(x,y)}
\end{scope}
\begin{scope}[shift={($(3*\hdist,-2*\vdist)$)}]
\drawchk{2}{c}
\end{scope}
\begin{scope}[shift={($(6.5*\hdist,0)$)}]
\drawchk{0}{d}
\end{scope}
\begin{scope}[shift={($(9.5*\hdist,-\vdist)$)}]
\drawchk{1}{f(x+1,y+1)}
\end{scope}
\end{scope}

\begin{scope}[shift={($(l1)+(5*\hdist,0)$)}]
\oneround
\end{scope}

\node(lab1) at (7*\hdist,-0.7*\vdist) {\textcolor{blue}{Round $x$} };
\node(lab2) at (7*\hdist,-8.5*\vdist) {\textcolor{blue}{Round $y$} };
\node(lab1) at (16*\hdist,-0.7*\vdist) {\textcolor{blue}{Round $x+1$} };
\node(lab2) at (16*\hdist,-8.5*\vdist) {\textcolor{blue}{Round $y+1$} };

\end{scope}
\end{tikzpicture}
    \caption{The four checks used to enforce the constraint on forbidden squares,
    and their answers.
    These four checks cannot happen altogether in the same play, but any pair of two successive checks can happen in parallel, in the same play.
    }%
\label{fig:squaretrans}
\end{figure}

The two checks $p_1$ and $p_2$ can occur in parallel during the same round,
thus according to condition (c), $f(x,y)=c$.
The two checks $p_2$ and $p_0'$ can occur in parallel and the check $p_0'$
is one round ahead of the check $p_2$, with respect to both pools.
Since $\sigma$ is winning, condition (d) cannot happen
hence $(c,d)$ is not a forbidden square.
The two checks $p_0'$ and $p_1'$ may occur in parallel during the same round,
thus according to condition (c), $d=f(x+1,y+1)$.
Finally $(f(x,y),f(x+1,y+1))$ is not a forbidden square.

\paragraph{Constraint $UT$ on forbidden upper-triangles.}
Let $x,y\in [n-1]\times[m]$.
We consider the plays
\begin{align*}
&p_1 = u_{x,y}\chk_1\\
&p_2 = u_{x,y}\increment_{T,2}\increment_{B,2}\chk_2\\
&p_0' = u_{x,y}\increment_{T,2}\increment_{B,2}\increment_{T,0}\chk_0\\
&p_1' = u_{x+1,y}\chk_1\enspace.
\end{align*}
By definition of $f$, the answers to $p_1$ and $p'_1$ are
$f(x,y)$ and $f(x+1,y)$, respectively. Denote $c$ the answer to $p_2$
and $d$ the answer to $p_0'$.
The four checks are  illustrated on Figure~\ref{fig:squareup}.

\begin{center}
\begin{figure}[ht]
    \begin{tikzpicture}
\begin{scope}[scale=1]

\newcommand{\eeee}{mycolor}
\newcommand{\eeeeee}{mycolor}
\newcommand{\eeeee}{1.5cm}
\newcommand{\vfour}{black}
\newcommand{\vfourr}{mycolor}
\newcommand{\colw}{black}
\newcommand{\colwin}{ForestGreen}
\newcommand{\colend}{orange}
\newcommand{\colcheck}{BurntOrange}
\newcommand{\collose}{red}
\newcommand{\spec}{\vfour}
\newcommand{\ff}[1]{{}}

\newcommand{\vdist}{.6cm}
\newcommand{\hdist}{.7cm}

\tikzstyle{every node}=[node distance=\vdist]
\node(lab1) at (1,-1) {$T_0$ };
\node(lab2) [below of=lab1] {$T_1$ };
\node(lab3) [below of=lab2] {$T_2$ };
\node(lab0) [below of=lab3] { };
\node(lab4) [below of=lab0] {$B_0$ };
\node(lab5) [below of=lab4] {$B_1$ };
\node(lab6) [below of=lab5] {$B_2$ };

\tikzstyle{every node}=[node distance=.2cm]
\node(l1) [right of=lab1] {};
\node(l2) [right of=lab2] {};
\node(l3) [right of=lab3] {};
\node(l4) [right of=lab4] {};
\node(l5) [right of=lab5] {};
\node(l6) [right of=lab6] {};

\tikzstyle{every node}=[node distance=14cm]
\node(r1) [right of=l1] {};
\draw[gray] (l1) -- (r1);
\node(r2) [right of=l2] {};
\draw[gray] (l2) -- (r2);
\node(r3) [right of=l3] {};
\draw[gray] (l3) -- (r3);
\node(r4) [right of=l4] {};
\draw[gray] (l4) -- (r4);
\node(r5) [right of=l5] {};
\draw[gray] (l5) -- (r5);
\node(r6) [right of=l6] {};
\draw[gray] (l6) -- (r6);

\newcommand{\onesmallround}{
\tikzstyle{every state}=[fill=black,draw=none,inner sep=0pt,minimum size=0.2cm]
\node[state](b1)at (0,-\vdist){};
\tikzstyle{every node}=[node distance=\vdist]
\node[state](b2)[below of=b1]{};
\draw[\colw] (b1) -- (b2);
\tikzstyle{every node}=[node distance=2.5*\hdist]
\node[state](c2)[right of=b2]{};
\tikzstyle{every node}=[node distance=2*\vdist]
\node[state](c0)[above of=c2] {};
\draw[\colw] (c0) -- (c2);
}
\newcommand{\oneround}{
\onesmallround
\tikzstyle{every node}=[node distance=2.5*\hdist]
\node(sep0)[right of=c0]{};
\node(sep1)[right of=c2]{};
\tikzstyle{every node}=[node distance=0.8*\vdist]
\node(sepa0)[above of=sep0]{};
\node(sepa1)[below of=sep1]{};
\draw[dashed, \colw] (sepa0) -- (sepa1);

\begin{scope}[shift={($(6*\hdist,0)$)}]
\tikzstyle{every state}=[fill=black,draw=none,inner sep=0pt,minimum size=0.2cm]
\tikzstyle{every node}=[node distance=2*\hdist]
\node[state](a0)[fill=\vfour] at (0,0) {};
\tikzstyle{every node}=[node distance=\vdist]
\node[state](a1)[below of=a0]{};
\draw[\colw] (a0) -- (a1);
\tikzstyle{every node}=[node distance=3*\hdist]
\node[state](b1)[right of=a1]{};
\tikzstyle{every node}=[node distance=\vdist]
\node[state](b2)[below of=b1]{};
\draw[\colw] (b1) -- (b2);
\end{scope}
}

\newcommand{\totosep}{
\tikzstyle{every node}=[node distance=0.5*\hdist]
\node(sep0)[right of=c0]{};
\node(sep1)[right of=c2]{};
\tikzstyle{every node}=[node distance=0.8*\vdist]
\node(sepa0)[above of=sep0]{};
\node(sepa1)[below of=sep1]{};
\draw[dashed, \colw] (sepa0) -- (sepa1);
}

\newcommand{\drawchk}[2] {

\tikzstyle{every state}=[fill=\colcheck,draw=none,inner sep=0pt,minimum size=0.2cm]
\tikzstyle{every node}=[node distance=\vdist]
\node[state](cb)at (0,0){};
\node(d1)[above of=cb]{};
\node(d2)[above of=d1]{};
\node(d3)[above of=d2]{};
\node[state](ct)[above of=d3]{};
\draw[\colcheck] (cb) -- (ct);

\tikzstyle{every node}=[node distance=0.5*\hdist]
\node[state](ab)[right of=cb]{};
\node[state](at)[right of=ct]{};
\draw[\colcheck] (ab) -- (at);
\node(cc2) at (0.3*\vdist,4.5*\vdist) {$\textcolor{\colcheck}{ #2}$ };
}

\begin{scope}[shift={($(l4)+(5*\hdist,0)$)}]
\onesmallround
\begin{scope}[shift={($(0.5*\hdist,-\vdist)$)}]
\drawchk{1}{f(x,y)}
\end{scope}
\begin{scope}[shift={($(3*\hdist,-2*\vdist)$)}]
\drawchk{2}{c}
\end{scope}
\begin{scope}[shift={($(6.5*\hdist,0)$)}]
\drawchk{0}{d}
\end{scope}
\begin{scope}[shift={($(9.5*\hdist,-\vdist)$)}]
\drawchk{1}{f(x+1,y)}
\end{scope}
\end{scope}

\begin{scope}[shift={($(l1)+(5*\hdist,0)$)}]
\oneround
\end{scope}

\node(lab1) at (7*\hdist,-0.7*\vdist) {\textcolor{blue}{Round $x$} };
\node(lab2) at (7*\hdist,-8.5*\vdist) {\textcolor{blue}{Round $y$} };
\node(lab2) at (16*\hdist,-0.7*\vdist) {\textcolor{blue}{Round $x+1$} };

\end{scope}
\end{tikzpicture}
    \caption{The four checks used to enforce the constraint on forbidden upper-triangles, and their answers.
    These four checks cannot happen altogether in the same play, but any pair of two successive checks can happen in parallel, in the same play.
    }%
\label{fig:squareup}
\end{figure}
\end{center}
The two checks $p_1$ and $p_2$ can occur in parallel during the same round
thus according to condition (c), $f(x,y)=c$.
The two checks $p_2$ and $p_0'$ can occur in parallel,
with process $T_0$ one round ahead of $T_2$ and processes
$B_0$ and $B_2$ in the same round
thus according to condition (e), $(c,d)$ is \emph{not} a forbidden upper-triangle.
The two checks $p_0'$ and $p_1'$ can occur in parallel during the same round
thus according to condition (c), $d=f(x+1,y)$.
Finally, $(f(x,y),f(x+1,y))$ is not a forbidden upper-triangle.

\paragraph{Constraint $LT$ on forbidden lower-triangles.}
This case is symmetric with the $UT$ case above.

\medskip

Finally, $f$ is a finite bipartite coloring satisfying all six constraints.
This terminates the proof of the direct implication of Lemma~\ref{lem:equiv}.\qed%

\section*{Conclusion}
We have proved that the \dsp\  is undecidable for six processes,
thus closing an open problem which was first raised in~\cite{gastin}
and has afterwards been shown decidable in several special cases~\cite{madhu,acyclic,DBLP:conf/fsttcs/Gimbert17,beutner2019translating}.
The proof relies on a reduction from the \PCP, via an intermediary decision problem called the \lcp. A direct reduction from the \PCP\ is of course also possible, but in our opinion it gives a less clear picture of the reasons for the undecidability.

The construction shows undecidability of the \dsp\ when the winning condition
is local termination
(every process eventually end up in a final state).
Another natural condition in this setting
is the \emph{deadlock-freeness condition}, either global (the play is infinite)
or local (every process plays infinitely often).
The construction can be easily adapted to show undecidability
for deadlock-freeness as well,
using an encoding of the infinite version of the \PCP\@.

An interesting open problem is the decidability of the \dsp\ when
the automaton is~\emph{grid-free},
i.e.\ when there exists a bound $B$
such that whenever two \emph{parallel} runs are followed by a synchronization,
one of the two runs has length $\leq B$ (see~\cite[Definition 1, Corollary 5]{thiagarajan2014rabin} for a precise definition).
The construction in this paper is clearly not grid-free:
the two pools can run independently in parallel for an arbitrary long time
before being synchronized by the environment.
The existence of a winning strategy can be stated
as the satisfiability
of a Monadic Second Order
Logic (MSOL) formula
but this does not imply decidability,
 since MSOL is undecidable for grid-free systems~\cite{chalopin20191}.
 However, there are classes of systems with decidable control
 but undecidable MSOL satisfiability,
for example the \dsp\ is decidable for systems with four processes~\cite{DBLP:conf/fsttcs/Gimbert17} while MSOL satisfiability is undecidable
with only two processes~\cite{thiagarajan2014rabin}.
This might be the case for grid-free systems as well.

The decidability of the \dsp\ for five processes remains an open question.
Is it decidable like in the case of four processes?
The decidability proof in~\cite{DBLP:conf/fsttcs/Gimbert17}
relies on a small-model property of the winning strategies.
When two processes do synchronize, they acquire simultaneously perfect knowledge about their respective states, hence an upper-bound on the number of transitions needed by two processes to win. When there are only four processes, two processes
who synchronize can base their strategic decision on their current states and the boundedly many possible plays of the two other processes, hence a bound on the number of transitions needed by the four players to win.
This  property does not seem easy to get with five processes.

Is the~\dsp\ undecidable for five processes, then?
 At first sight there is no easy way to adapt the present construction
with two pools of three processes
to show the undecidability for five processes.
If the bottom pool is reduced to two processes only,
it seems hard to enforce the lower-triangle constraints of the \lcp.
When the two sole processes of the bottom pool
synchronize they acquire perfect knowledge of
the state of their pool, and it is not possible anymore to constrain their future
behaviour with respect to their past behaviour: what was done is gone.

\bibliographystyle{alphaurl}

\bibliography{trombones}

\newpage

\appendix



\end{document}